\documentclass[lettersize,journal]{IEEEtran}
\usepackage{amsmath,amssymb,amsfonts}
\usepackage{amsthm}
\usepackage{array}
\usepackage{algorithmic}
\usepackage{algorithm}
\usepackage{balance}
\usepackage{caption}
\usepackage{graphicx}
\usepackage{epstopdf}
\usepackage{float}
\usepackage{subcaption}
\usepackage{textcomp}
\usepackage{xcolor}
\usepackage{stfloats}
\usepackage{verbatim}
\usepackage{cite}
\usepackage{bm}
\usepackage{lettrine}
\graphicspath{ {figs/} }

\IEEEaftertitletext{\vspace{-8pt}}
\newtheorem{theorem}{Theorem}
\newtheorem{lemma}{Lemma}

\theoremstyle{definition}

\begin{document}

\title{Integrated Sensing, Communication, and Control for UAV-Assisted Mobile Target Tracking}

\author{\IEEEauthorblockN{Zhiyu Chen, \IEEEmembership{Student Member, IEEE}, Ming-Min Zhao, \IEEEmembership{Senior Member, IEEE}, Songfu Cai, \IEEEmembership{Member, IEEE}, Ming Lei, \IEEEmembership{Member, IEEE}, and Min-Jian Zhao, \IEEEmembership{Member, IEEE}
		\thanks{Z. Chen, M. M. Zhao, S. Cai, M. Lei, and M. J. Zhao are with the College of Information Science and Electronic Engineering, Zhejiang University, Hangzhou 310027, China, and also with Zhejiang Provincial Key Laboratory of Multi-Modal Communication Networks and Intelligent Information Processing, Hangzhou 310027, China (e-mail: \{12431083, zmmblack, sfcai, lm1029, mjzhao\}@zju.edu.cn).}
	}
}

\maketitle

\begin{abstract}
	Unmanned aerial vehicles (UAVs) are increasingly deployed in mission-critical applications such as target tracking, where they must simultaneously sense dynamic environments, ensure reliable communication, and achieve precise control. A key challenge here is to jointly guarantee tracking accuracy, communication reliability, and control stability within a unified framework. To address this issue, we propose an integrated sensing, communication, and control (ISCC) framework for UAV-assisted target tracking, where the considered tracking system is modeled as a discrete-time linear control process, with the objective of driving the deviation between the UAV and target states toward zero. We formulate a stochastic model predictive control (MPC) optimization problem for joint control and beamforming design, which is highly non-convex and intractable in its original form. To overcome this difficulty, the target state is first estimated using an extended Kalman filter (EKF). Then, by deriving the closed-form optimal beamforming solution under a given control input, the original problem is equivalently reformulated into a tractable control-oriented form. Finally, we convexify the remaining non-convex constraints via a relaxation-based convex approximation, yielding a computationally tractable convex optimization problem that admits efficient global solution. Numerical results show that the proposed ISCC framework achieves tracking accuracy comparable to a non-causal benchmark while maintaining stable communication, and it significantly outperforms the conventional control and tracking method.
\end{abstract}

\begin{IEEEkeywords}
	Integrated sensing, communication, and control (ISCC), unmanned aerial vehicle (UAV), target tracking, model predictive control (MPC), beamforming.
\end{IEEEkeywords}

\section{Introduction}
\IEEEPARstart{I}{ntegrated} sensing and communication (ISAC) has emerged as a key technology for the sixth-generation (6G) wireless networks, designed to address the growing demand for more efficient spectrum utilization, reduced hardware costs, and improved performance\cite{I1,I2,I3,I4}. Traditionally, sensing and communication systems are developed separately, with each operating on distinct frequency bands. However, with the rapid growth of Internet of Things (IoT) devices, spectrum scarcity has become a significant challenge for both radar and communication\cite{I5}. Moreover, the forthcoming 6G networks are expected to support a wide range of environment-aware applications, such as autonomous vehicles, smart cities, and mobile virtual reality, which impose high demands on both communication quality and sensing accuracy\cite{I1},\cite{I6}. In response to these challenges, ISAC offers a promising solution by enabling the simultaneous operation of sensing and communication on shared radio spectrum and hardware platforms. By effectively sharing resources across spatial, temporal, and frequency domains, the ISAC technology not only significantly reduces hardware complexity but also alleviates the strain of spectrum scarcity\cite{I5},\cite{I7,I8}.

Extensive efforts have been devoted to multi-antenna ISAC systems, with a primary focus on transmit beamforming and/or waveform design under coupled sensing and communication metrics \cite{I9,I10,I11,I12,I13,I14,I15}. For example, \cite{I9} optimizes transmit beamforming to improve radar beampattern performance while satisfying communication quality-of-service requirements such as the signal-to-interference-plus-noise ratio (SINR). Beyond reusing information-bearing signals for sensing, \cite{I10} and \cite{I11} further propose dedicated sensing signal transmission to provide additional degrees of freedom (DoFs) and improve the joint performance. Despite these advances, most existing results are developed for terrestrial ISAC deployments, where sensing may suffer inherent limitations in practical propagation environments. Specifically, target detection and parameter estimation often rely on line-of-sight (LoS) links between sensing transceivers (e.g., ISAC base stations (BSs) or access points (APs)) and targets, whereas surrounding obstacles and scatterers may block LoS links and introduce non-LoS (NLoS) paths and clutter, thereby degrading sensing reliability. Moreover, high-precision long-range sensing typically requires sufficient transmit power; yet terrestrial BSs/APs are often power-limited. Consequently, for distant or obstructed targets, the echoed signals may experience severe round-trip path loss, resulting in significant sensing degradation or even sensing failure. Therefore, terrestrial ISAC BSs/APs can usually provide effective sensing and communication only within a limited range and under favorable propagation conditions.

Motivated by the rapid development of unmanned aerial vehicle (UAV)-enabled wireless communications \cite{U1}, UAVs are anticipated to serve as aerial ISAC platforms to alleviate the above limitations \cite{b23}. Compared with terrestrial BSs/APs, UAVs operating at elevated altitudes can establish air-to-ground (A2G) links with strong LoS components with high probability \cite{U3,U4,U5,U6,U7}, which benefits both reliable communication and sensing signal acquisition. In scenarios where terrestrial infrastructure is unavailable or temporarily overloaded (e.g., post-disaster recovery and temporary outdoor hot spots), UAVs can be rapidly deployed as aerial BSs to provide continuous communication services for ground users \cite{U8}. More importantly, UAV mobility enables flexible positioning and trajectory design \cite{U9,U10}, allowing UAVs to approach regions of interest and adapt the propagation geometry with respect to both users and targets. This capability improves LoS availability and mitigates severe round-trip path loss of echoed signals, thereby enhancing sensing reliability and extending the service coverage of ISAC systems.

Recently, UAV-enabled ISAC networks have attracted growing attention by leveraging UAV mobility and favorable A2G LoS propagation to enhance both sensing reliability and communication coverage \cite{b3,U2,b21,b24,b4,b9,b8,C1,C2}. For instance, the works \cite{b3,U2,b21} mainly consider stationary sensing targets and investigate joint UAV trajectory and transmit design under sensing-quality constraints, including periodic ISAC mechanisms that optimize sensing instants and flight trajectories \cite{b3}, multi-user downlink with area sensing via joint positioning and beamforming optimization \cite{U2}, and monostatic UAV-ISAC localization relying on weighted trajectory design to characterize the sensing-communication trade-off \cite{b21}. Furthermore, several works extend UAV-ISAC to dynamic scenarios by explicitly accounting for the continuous motions of both the UAV and the tracked object, such as secure communication with UAV-enabled tracking \cite{b8}, multi-UAV tracking that balances data rate and localization accuracy \cite{b4}, space-air-ground integrated network (SAGIN)-oriented UAV-ISAC tracking with robust coverage and beamforming \cite{b9}, and trajectory optimization that minimizes predicted estimation bounds for both the relative position and the relative velocity \cite{b24}. Despite these representative advances, most existing UAV-ISAC designs still treat UAV motion primarily as a kinematic planning variable and optimize trajectories at the geometric level, without explicitly incorporating the underlying closed-loop control system and its control inputs that govern real-time maneuvering and tracking behavior \cite{b3,U2,b21,b24,b4,b9,b8}. Although several recent studies begin to introduce control variables into UAV motion design \cite{C1,C2}, they often rely on idealized assumptions such as the target locations being known a priori, thereby bypassing a central challenge in practical target tracking where the target state is only partially observed and must be inferred online. These limitations motivate an integrated sensing, communication, and control (ISCC) design that explicitly models UAV-target tracking-error state evolution, couples control actions with sensing/communication constraints, and yields implementable real-time policies in the presence of tracking uncertainty.

Based on the above discussions, this paper investigates the UAV-assisted target tracking system shown in Fig.~\ref{fig1}, where an ISAC-enabled UAV simultaneously tracks a moving target and serves a ground user (GU). To enable accurate tracking while providing reliable communication, we develop an ISCC framework and investigate the real-time joint design of UAV maneuvering and transmit beamforming under practical sensing and communication requirements. This results in a stochastic model predictive control (MPC) formulation that couples control actions, state uncertainty, and beamforming decisions, making the resulting optimization highly non-convex. By leveraging the underlying problem structure, we develop a tractable reformulation together with a relaxation-based convex approximation, which makes the proposed ISCC design amenable to online implementation. The main contributions are summarized as follows.
\begin{itemize}
	\item
	We develop an ISCC framework for UAV-assisted target tracking, where an ISAC-enabled UAV simultaneously tracks a moving target and serves a GU. The tracking task is explicitly cast as a discrete-time linear control process by defining the UAV-target state deviation as the system state and the UAV maneuvering command as the control input. Under this ISCC modeling, we formulate an ISCC-based stochastic MPC problem that jointly optimizes the control inputs and transmit beamforming over a receding horizon to drive the tracking error to zero while satisfying sensing and communication requirements, which is highly non-convex due to the coupling among control actions, state uncertainty, and beamforming constraints.
	
	\item
	We propose a tractable online solution approach for the above ISCC-based stochastic MPC problem. Specifically, an extended Kalman filter (EKF) is employed to estimate and predict the target state required by MPC. Then, by deriving the closed-form optimal beamforming solution for any given control input, we obtain an equivalent control-oriented reformulation that removes the beamforming variables from the coupled joint design. The remaining non-convex constraints are then handled via a convex lower-bound relaxation technique, yielding a convex program that can be efficiently solved.
	
	\item
	Numerical simulations validate the effectiveness of the proposed ISCC framework. The results show that the proposed design achieves tracking performance close to a non-causal MPC benchmark (with access to future target evolution) while maintaining a stable communication service at the required rate threshold, and it consistently outperforms a conventional control and tracking baseline under practical constraints.
\end{itemize}

The remainder of this paper is organized as follows. Section~II presents the system model, including the communication model, sensing and measurement model, and the control-system model. Section~III formulates the stochastic MPC optimization problem for joint control and beamforming design. Section~IV develops the proposed ISCC framework and the corresponding solution approach. Section~V provides numerical results. Finally, Section~VI concludes the paper.

\textit{Notations}: Scalars, vectors, and matrices are denoted by lower-case letters, bold-face lower-case letters, and bold-face upper-case letters, respectively. $(\cdot)^{T}$ and $(\cdot)^{H}$ denote the transpose and conjugate transpose, respectively. $\mathbb{R}^{x\times y}$ and $\mathbb{C}^{x\times y}$ denote the spaces of real and complex matrices of size $x\times y$, respectively. $\|\bm{\alpha}\|$ denotes the Euclidean norm of a complex vector $\bm{\alpha}$, and $|z|$ denotes the magnitude of a complex scalar $z$. $\mathbb{E}\{\cdot\}$ denotes statistical expectation. The symbol $\triangleq$ indicates a definition. $\Re(\cdot)$ and $\Im(\cdot)$ denote the real and imaginary parts of a complex number, respectively. $\mathrm{diag}(\bm{\alpha})$ denotes the diagonal matrix whose diagonal entries are given by the elements of $\bm{\alpha}$. The Hadamard product and Kronecker product are denoted by $\odot$ and $\otimes$, respectively. $\mathbf{I}$ denotes an identity matrix with an appropriate dimension, $\bm{0}$ denotes an all-zero vector or matrix with an appropriate dimension, and $\bold{1}_{n}$ denotes an $n$-dimensional all-one column vector.

\section{System Model}\label{sec:II}
As shown in Fig. \ref{fig1}, we consider a UAV-assisted tracking system, where an ISAC UAV simultaneously tracks a moving target and communicates with a single-antenna GU within a predetermined flight duration $T$. The flight period $T$ is discretized into $N$ sufficiently small time slots using a discrete path planning approach \cite{b5}, with the duration of each slot given by $\delta_t = T/N$. For clarity, let $n\in\mathcal{N} \triangleq \{ 1,\cdots,N\}$ denote the time slot index. Within each time slot, the UAV is assumed to maintain a fixed flight altitude $H$, and the motion states (i.e., the horizontal position and the velocity) of both the target and the UAV remain constant. Specifically, in the $n$-th time slot, their states are given by
\begin{equation}\small
	\bold{s}^{U}[n]=\begin{bmatrix}
		\bold{p}^{U}[n] \\
		\bold{v}^{U}[n] 
	\end{bmatrix},\ 
	\bold{s}^{t}[n]=\begin{bmatrix}
		\bold{p}^{t}[n] \\
		\bold{v}^{t}[n] 
	\end{bmatrix},\
\end{equation} 
where $\bold{p}^{U}[n] \triangleq [p_x^{U}[n], p_y^{U}[n]]^T$ and $\bold{p}^{t}[n] \triangleq [p_x^{t}[n], p_y^{t}[n]]^T$ denote the horizontal positions of the UAV and the target, respectively, while $\bold{v}^{U}[n] \triangleq [v_x^{U}[n], v_y^{U}[n]]^T$ and $\bold{v}^{t}[n] \triangleq [v_x^{t}[n], v_y^{t}[n]]^T$ represent their corresponding velocity vectors. In addition, the UAV employs a uniform planar array (UPA) with half-wavelength spacing along both the $x$- and $y$-axes. The transmit and receive antenna numbers are $M_t = M_x^t \times M_y^t$ and $M_r = M_x^r\times M_y^r$, where $M_x^{t/r}$ and $M_y^{t/r}$ represent the numbers of elements along the $x$- and $y$-axes, respectively.

\begin{figure}[t]
	\centering
	\includegraphics[width=0.95\linewidth]{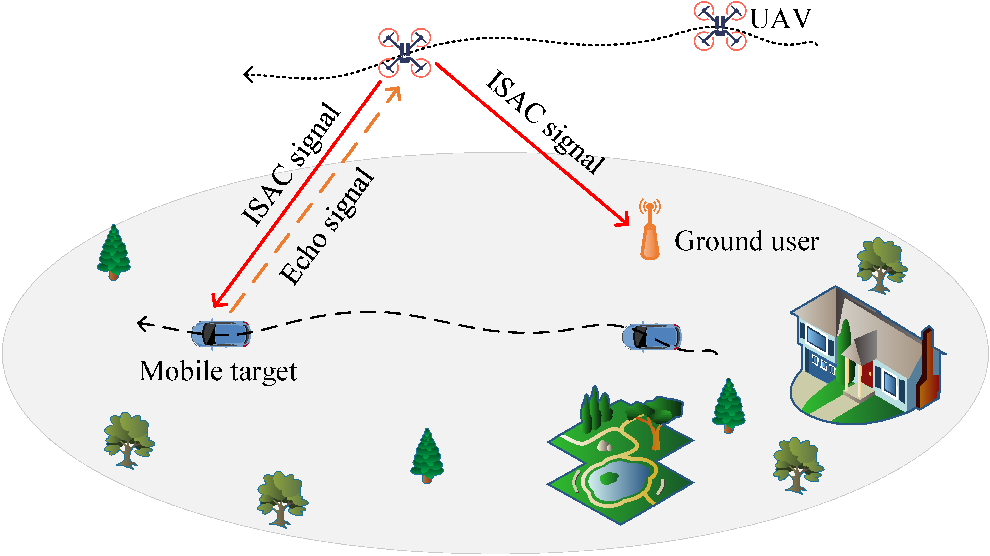}
	\caption{UAV-assisted mobile target tracking system.}
	\label{fig1}
	\vspace{0pt}
\end{figure}

\subsection{Communication Model}
The communication link between the UAV and GU is assumed to be dominated by the LoS component\cite{b3}. Hence, the channel state information (CSI) from the UAV to the GU in the $n$-th time slot can be molded as
\begin{equation}\label{E-8} \small
	\bold{h}^H(\bold{p}^{U}[n],\bold{p}^{u})=\sqrt{\beta_0d(\bold{p}^{U}[n],\bold{p}^{u})^{-2}}\ \bold{a}^H(\bold{p}^{U}[n],\bold{p}^{u}),
\end{equation}
where $\bold{p}^{u}\triangleq[p_x^{u}, p_y^{u}]^T$ denotes the position of the GU, and $\beta_0$ represents the channel power at the reference distance of 1 m. Moreover, $d(\bold{p}^{U}[n],\bold{p}^{u})=\sqrt{(\Vert\bold{p}^{U}[n]-\bold{p}^{u}\Vert^2+H^2)}$ denotes the distance between the UAV and GU at the $n$-th time slot, and $\bold{a}(\bold{p}^{U}[n],\bold{p}^{u})$ is the transmit steering vector from the UAV to the GU, as defined in (\ref{Ts}) at the top of the next page,
\begin{figure*}[ht]  
	\centering 
	\vspace*{0pt} 
	\begin{equation}\label{Ts} \small
		\bold{a}(\bold{p}^{U}[n],\bold{p}^{u})=[1,\dots,e^{j\pi(M_x^t-1) \Phi(\bold{p}^{U}[n],\bold{p}^{u})}]^T \otimes [1,\dots,e^{j\pi(M_y^t-1)\Omega(\bold{p}^{U}[n],\bold{p}^{u})}]^T
	\end{equation}
	\hrulefill 
\end{figure*}
where $\Phi(\bold{p}^{U}[n],\bold{p}^{u})=\frac{p_x^U[n]-p_x^u}{d(\bold{p}^{U}[n],\bold{p}^{u})}$ and $\Omega(\bold{p}^{U}[n],\bold{p}^{u})=\frac{p_y^U[n]-p_y^u}{d(\bold{p}^{U}[n],\bold{p}^{u})}$. 

Without loss of generality, let $s_n(t) \sim \mathcal{CN}(0,1)$ denote the information-bearing signal transmitted to the GU at (continuous) time $t$ within the $n$-th time slot. Then, the transmitted ISAC signal can be denoted by $\bold{x}_n(t)=\bold{w}_ns_n(t)$, where $\bold{w}_n\in \mathbb{C}^{M_t\times 1}$ is the transmit beamforming vector used in the $n$-th time slot. Consequently, the received signal at the GU is
\begin{equation}\label{E-5} \small
	y_n(t) = \bold{h}^H(\bold{p}^{U}[n],\bold{p}^{u})\bold{x}_n(t) + n_c(t).
\end{equation}
where $n_c(t) \sim \mathcal{CN}(0,\sigma_c^2)$ is a circularly symmetric complex Gaussian (CSCG) noise at the GU receiver. The achievable rate (in bps/Hz) at the $n$-th time slot is then given by
\begin{equation}\label{E-6} \small
	R_n=\text{log}_2(1+|\bold{h}^H(\bold{p}^{U}[n],\bold{p}^{u})\bold{w}_n|^2/\sigma_c^2).
\end{equation}

\subsection{Sensing and Measurement Model}
The communication signals $\{s_n(t)\}$ can also be utilized for sensing by leveraging their reflections off the target to estimate its parameters. At the $n$-th time slot, the received echo is
\begin{equation}\label{E-7}\small 
	\bold{r}_n(t) = \beta_n e^{j2\pi \mu_n t} \bold{b}_n \bold{a}^H_n \bold{w}_n s_n(t-\tau_n)+ \bold{n}_r(t),
\end{equation}
where $\bold{a}_n \triangleq \bold{a}(\bold{p}^{U}[n],\bold{p}^{t}[n])$, and $\bold{b}_n \triangleq \bold{b}(\bold{p}^{U}[n],\bold{p}^{t}[n])$ denotes the receive steering vector, which is defined in a similar form as (\ref{Ts}). Moreover, $\bold{n}_r(t)\sim\mathcal{CN}(\mathbf{0},\sigma_r^2 \mathbf{I}_{M_r})$ represents the CSCG noise, $\beta_n = \frac{\sqrt{\beta_r}}{d^2(\bold{p}^{U}[n], \bold{p}^{t}[n])}$ with $\beta_r = \lambda^2 \varepsilon/(64\pi^3)$ denotes the reflection coefficient, where $\lambda$ and $\varepsilon$ denote the wavelength and radar cross section (RCS), respectively. $\mu_n$ and $\tau_n$ correspond to the Doppler shift and the round-trip time delay, respectively. To estimate $\tau_n$ and $\mu_n$, a matched filtering operation is applied to the received signal, which is mathematically expressed as\cite{b6}
\begin{equation}\label{E-18} \small
	\{\hat{\tau}_{n},\hat{\mu}_{n}\}=\arg\max_{\tau,\mu}\left|\int_{0}^{\delta_t}\mathbf{r}_{n}\left(t\right)s_{n}^{*}\left(t-\tau\right)e^{-j2\pi\mu t}dt\right|^{2}.
\end{equation}
Then, by compensating (\ref{E-7}) using $\{\hat{\tau}_{n},\hat{\mu}_{n}\}$, we have
\begin{subequations}\label{E-10}\small
	\begin{align}
		\hat{\bold{r}}_n=& \int_{0}^{\delta_t}\mathbf{r}_{n}\left(t\right)s_{n}^{*}\left(t-\hat{\tau}_{n}\right)e^{-j2\pi\hat{\mu}_{n}t}dt \notag\\
		=&\beta_{n}\bold{b}_n\bold{a}^H_n\bold{w}_n \int_{0}^{\delta_t}s_{n}\left(t-\tau_{n}\right)s_{n}^{*}\left(t-\hat{\tau}_{n}\right)e^{j2\pi(\mu_{n}-\hat{\mu}_{n})t}dt \notag\\
		& +\int_0^{\delta_t}\mathbf{n}_{r}\left(t\right)s_{n}^*\left(t-\hat{\tau}_{n}\right)e^{-j2\pi\hat{\mu}_{n}t}dt \notag\\
		\triangleq& G\beta_{n}\bold{b}_n\bold{a}^H_n\bold{w}_n+\mathbf{\tilde{n}}_r,\tag{8}
	\end{align}
\end{subequations}
where $G\approx N_{\text{sym}}$ is the matched-filtering gain, with $N_{\text{sym}}$ being the number of symbols received during each time slot, and $\tilde{\bold{n}}_r \sim \mathcal{CN}(\mathbf{0}, N_{\text{sym}}\sigma_r^2 \mathbf{I}_{M_r})$ denotes the effective noise vector. To derive a measurement model for the target state $\bold{s}^{t}[n]$, the complex-valued expression in (\ref{E-10}) is reformulated into the following real-valued form:
\begin{equation}\label{E-11} \small
	\bm{\rho}_n \triangleq
	\begin{bmatrix}
		\Re\left\{\hat{\bold{r}}_n\right\} \\
		\Im\left\{\hat{\bold{r}}_n\right\}
	\end{bmatrix}
	=  \begin{bmatrix}
		\Re\left\{G\beta_{n}\bold{b}_n\bold{a}^H_n\bold{w}_n\right\} \\
		\Im\left\{G\beta_{n}\bold{b}_n\bold{a}^H_n\bold{w}_n\right\}
	\end{bmatrix} + \mathbf{n}_{\rho},
\end{equation}
where
\begin{equation} \small
	\mathbf{n}_{\rho} \triangleq
	\begin{bmatrix}
		\Re\left\{\mathbf{\tilde{n}}_r\right\} \\
		\Im\left\{\mathbf{\tilde{n}}_r\right\}
	\end{bmatrix} \sim \mathcal{N}(\bold{0},\frac{N_{\text{sym}}}{2}\sigma_r^2 \mathbf{I}_{2M_r}).
\end{equation}

To facilitate the target state estimation, we further express the sensing observation in terms of the range and Doppler measurements extracted from the matched-filter output. Specifically, the measurement models associated with the target position $\bold p^t[n]$ and velocity $\bold v^t[n]$ are given by
\vspace{-5pt}

\begin{subequations}\label{E-12}\small
	\begin{align} 
		& \hat{d}_n \triangleq \frac{c\hat{\tau}_{n}}{2}=d(\bold{p}^{U}[n],\bold{p}^{t}[n])+n_d, \\
		& \hat{\mu}_{n}=-\frac{2f_c(\bold{p}^{U}[n]-\bold{p}^{t}[n])^T(\bold{v}^{U}[n]-\bold{v}^{t}[n])}{cd(\bold{p}^{U}[n],\bold{p}^{t}[n])}+n_\mu,
	\end{align}
\end{subequations}
where $f_c$ and $c$ represent the carrier frequency and the speed of light, respectively, $n_d\sim\mathcal{N}(0,\sigma_1^2)$ and $n_\mu\sim\mathcal{N}(0,\sigma_2^2)$ denote the measurement noises with $
	\sigma_i^2=\frac{a_i^2\sigma_r^2}{GM_r|\beta_{n}|^2|\bold{a}^H_n\bold{w}_n|^2}, \ i=1,2$,
$a_1$ and $a_2$ are constants related to the system configuration, signal designs as well as the specific signal processing algorithms\cite{b6,b7}. 

Based on (\ref{E-11}) and (\ref{E-12}), the overall mobile target state measurement model can be compactly rewritten as
\begin{equation}\label{M1}\small
	\mathbf m_n=\mathbf f(\mathbf s_t[n])+\mathbf z_n, 
\end{equation}
where $\mathbf m_n \triangleq [\hat d_n,\hat\mu_n,\boldsymbol\rho_n^T]^T$ denotes the state measurement, and $\mathbf z_n \triangleq [n_d,n_\mu,\mathbf n_\rho^T]^T$ represents the measurement noise, which is modeled as a zero-mean Gaussian random vector with covariance matrix $\bold{Q}_m=\text{diag}([\sigma_1^2,\sigma_2^2,\frac{G\sigma_r^2}{2}\bold{1}^T_{2M_r}])$. The nonlinear measurement function $\mathbf f(\cdot)$ is given by
\begin{equation}\small
	\mathbf f(\mathbf s_t[n]) \triangleq
	\begin{bmatrix}
		d(\mathbf p^U[n],\mathbf p^t[n])\\[1mm]
		-\dfrac{2f_c(\bold{p}^{U}[n]-\bold{p}^{t}[n])^T(\bold{v}^{U}[n]-\bold{v}^{t}[n])}{cd(\bold{p}^{U}[n],\bold{p}^{t}[n])}\\[2mm]
		\Re\left\{G\beta_{n}\bold{b}_n\bold{a}^H_n\bold{w}_n\right\}\\
		\Im\left\{G\beta_{n}\bold{b}_n\bold{a}^H_n\bold{w}_n\right\}
	\end{bmatrix}.\notag
\end{equation}

\subsection{Control System Model}
To characterize the motion of both the UAV and target over the considered period $T$, we model their dynamics using the discrete-time kinematic equation\cite{b8}. First, we consider a deterministic UAV control system, whose motion is governed by the following kinematic equations:
\begin{equation} \small
	\left\{\begin{aligned}
		p_x^U[n] &= p_x^U[n-1] + v_x^U[n-1]\delta_t + \tfrac{1}{2}a_x[n-1]\delta_t^2,\\
		p_y^U[n] &= p_y^U[n-1] + v_y^U[n-1]\delta_t + \tfrac{1}{2}a_y[n-1]\delta_t^2,\\
		v_x^U[n] &= v_x^U[n-1] + a_x[n-1]\delta_t,\\
		v_y^U[n] &= v_y^U[n-1] + a_y[n-1]\delta_t,
	\end{aligned}
	\right.
\end{equation}
which can be equivalently expressed in matrix form as
\begin{equation}\label{S1} \small
	\bold{s}^U[n+1] = \bold{A}\bold{s}^U[n] + \bold{B}\bold{u}_n,
\end{equation}
where $\bold{u}_n\triangleq [a_x,a_y]^T \in \mathbb{R}^{2} $ represents the control input, and the constant matrices $\bold{A}$ and $\bold{B}$ are given by
\begin{equation}\label{AB} \small
	\bold{A}=\begin{bmatrix}
		1 & 0 & \delta_t & 0 \\
		0 & 1 & 0 & \delta_t \\
		0 & 0 & 1 & 0 \\
		0 & 0 & 0 & 1
	\end{bmatrix},\ 
	\bold{B}=\begin{bmatrix}
		\tfrac{1}{2}\delta_t^2 & 0 \\
		0 & \tfrac{1}{2}\delta_t^2 \\
		\delta_t & 0 \\
		0 & \delta_t
	\end{bmatrix}.
\end{equation}
Next, we assume that the target moves at an approximately constant velocity within each time slot\cite{b8,b9,b4}, then its motion can be modeled by the following kinematic equations:
\begin{equation}\label{T1} \small
	\left\{\begin{aligned}
		p_x^t[n] &= p_x^t[n-1] + v_x^t[n-1]\delta_t + n_x,\\
		p_y^t[n] &= p_y^t[n-1] + v_y^t[n-1]\delta_t + n_y,\\
		v_x^t[n] &= v_x^t[n-1] + n_{v_x},\\
		v_y^t[n] &= v_y^t[n-1] + n_{v_y},
	\end{aligned}
	\right.
\end{equation}
where $n_x$, $n_y$, $n_{v_x}$, and $n_{v_y}$ denote the corresponding transition noise terms, which follow the distributions $\mathcal{N}(0,\sigma_x^2)$, $\mathcal{N}(0,\sigma_x^2)$, $\mathcal{N}(0,\sigma_x^2)$ and $\mathcal{N}(0,\sigma_{v_y}^2)$, respectively. Similarly, equation (\ref{T1}) can be expressed in matrix form as
\begin{equation}\label{S2} \small
	\bold{s}^t[n+1] = \bold{A}\bold{s}^t[n] + \bold{n}_s,
\end{equation}
where $\bold{n}_s = [n_x, n_y, n_{v_x}, n_{v_y}]^T$ is modeled as a zero-mean Gaussian random vector with covariance matrix $\bold Q_s \triangleq \text{diag}([\sigma_x^2,\sigma_y^2,\sigma_{v_x}^2,\sigma_{v_y}^2])$. Combining \eqref{S1} and \eqref{S2}, the difference between the UAV and target states evolves according to
\begin{equation}\label{S3} \small
	\bold{s}^U[n+1]-\bold{s}^t[n+1] = \bold{A}\big(\bold{s}^U[n]-\bold{s}^t[n]\big)+\bold{B}\bold{u}_n-\bold{n}_s.
\end{equation}
By defining the target state tracking error as $\bold{e}_n\triangleq \bold{s}^U[n]-\bold{s}^t[n]$, (\ref{S3}) can be equivalently expressed in the following compact form:
\begin{equation}\label{S4} \small
	\bold{e}_{n+1} = \bold{A}\bold{e}_n + \bold{B}\bold{u}_n - \bold{n}_s,
\end{equation}
which describes a discrete-time linear control system in which the state tracking error $\bold{e}_n$ serves as the system state.

\section{Problem Formulation}\label{sec:III}
In this paper, we aim to regulate the target state tracking error $\bold{e}_n$ around the zero reference, thereby enabling smooth target tracking by jointly optimizing the control input $\bold{u}_n$ and the transmit beamforming $\bold{w}_n$, under communication, sensing, and transmit power constraints at each time slot. Achieving this requires a carefully designed objective function. 

To construct such an objective, we first recall the classical optimal control method in control theory, i.e., linear quadratic regulator (LQR)\cite{b11}. As a standard paradigm for solving closed-loop optimal control problems in linear dynamical systems, LQR strikes a balance between target state tracking deviation and control effort, and yields an optimal state-feedback control law in closed form. A typical closed-loop feedback control system with an LQR traget tracking controller is illustrated in Fig. \ref{LQR_framework}.
\begin{figure}[!h]
	\centering
	\includegraphics[width=0.95\linewidth]{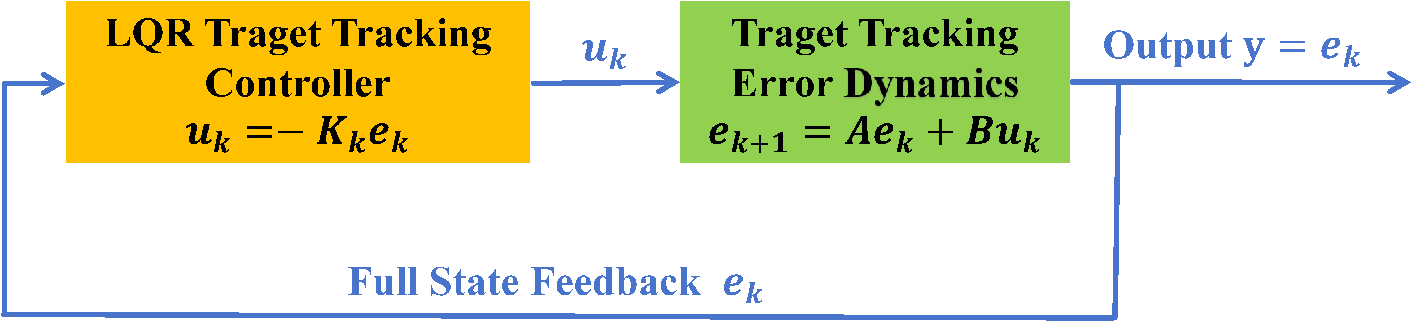}
	\caption{Illustration of a typical closed-loop feedback control system with LQR traget tracking controller.}
	\label{LQR_framework}
	\vspace{0pt}
\end{figure}

In the control system illustrated in Fig. \ref{LQR_framework}, the LQR traget tracking controller defines the following quadratic cost function over the entire mission horizon $[0, N]$:
\begin{equation}\label{LQR-1}\small
	\mathcal{J}(\bold{u}_k)=\sum_{k=0}^{N-1} \left(\bold{e}_k^T\bold{Q}\bold{e}_k+\bold{u}_k^T\bold{R}\bold{u}_k\right)+\bold{e}_N^T\bold{Q}\bold{e}_N,
\end{equation}
where the term $\bold{e}_k^T \bold{Q} \bold{e}_k$ penalizes the target state tracking error, and $\bold{u}_k^T \bold{R} \bold{u}_k$ accounts for the power cost incurred by generating the control input\cite{b10}. The weight matrices $\bold{Q}$ and $\bold{R}$ directly determine the dynamic performance of the system and the penalty associated with control energy. In particular, $\bold{Q}\succ0$ penalizes state deviations and reflects the relative importance of different state components, whereas $\bold{R}\succ0$ penalizes control effort.

Given the quadratic cost function in (\ref{LQR-1}), the objective of the LQR traget tracking controller is to determine the optimal sequence of control inputs $\{\bold{u}_0, \bold{u}_1, \cdots, \bold{u}_{N-1}\}$ that minimizes $\mathcal{J}(\bold{u}_k)$. To this end, dynamic programming is employed to derive the optimal control policy. Specifically, let $V_k(\bold{e})$ denote the optimal cost-to-go function from time slot $k$ to the terminal time slot $N$, which is defined as
\begin{equation}\label{LQR-2}\small
V_k(\bold{e}) = \min_{\{\bold{u}_k,\cdots,\bold{u}_{N-1}\}} \ \sum_{t=k}^{N-1} (\bold{e}_t^T\bold{Q}\bold{e}_t+\bold{u}_t^T\bold{R}\bold{u}_t) + \bold{e}_N^T\bold{Q}\bold{e}_N.\notag
\end{equation}
According to the Bellman optimality principle \cite{b30}, the following recursive relation holds:
\begin{equation}\label{LQR-3}\small
	V_k(\bold{e}) = \min_{\bold{u}_k} \Big\{
	\bold{e}_k^T\bold{Q}\bold{e}_k+\bold{u}_k^T\bold{R}\bold{u}_k + V_{k+1}(\bold{e})\Big\}.
\end{equation}
Under the linear state-feedback control law $\bold{u}_k = -\bold{K}_k \bold{e}_k$, there exists a symmetric positive semidefinite matrix $\bold{P}_k$ such that $V_k(\bold{e}) = \bold{e}_k^T\bold{P}_k\bold{e}_k$ \cite{b11}. By substituting $V_{k+1}(\bold{e}) = \bold{e}_{k+1}^T\bold{P}_{k+1}\bold{e}_{k+1}$ and $\bold{e}_{k+1} = \bold{A}\bold{e}_k + \bold{B}\bold{u}_k$ into (\ref{LQR-3}), we have
\vspace{-5pt}

\begin{subequations}\label{LQR-4}\small
	\begin{align}
		V_k(\bold{e}) 
		= \min_{\bold{u}_k} \Big\{
		&\bold{e}_k^T\bold{Q}\bold{e}_k+\bold{u}_k^T\bold{R}\bold{u}_k  \notag\\
		&+(\bold{A}\bold{e}_k+\bold{B}\bold{u}_k)^T
		\bold{P}_{k+1}(\bold{A}\bold{e}_k+\bold{B}\bold{u}_k)
		\Big\}.\tag{22}
	\end{align}
\end{subequations}
Differentiating the above expression with respect to $\bold{u}_k$ and setting the result to zero yields the optimal control law:
\begin{equation}\label{LQR-5}\small
\bold{u}_k^* = -\left(\bold{R}+\bold{B}^T\bold{P}_{k+1}\bold{B}\right)^{-1}\bold{B}^T\bold{P}_{k+1}\bold{A}\bold{e}_k.
\end{equation}
To compute the matrix $\bold{P}_{k+1}$, we substitute (\ref{LQR-5}) back into (\ref{LQR-4}), resulting in
\vspace{-5pt}

\begin{subequations}\label{D1}\small
\begin{align}
\bold{P}_k =& \bold{Q} + \bold{A}^T\bold{P}_{k+1}\bold{A} \notag\\
&-
\bold{A}^T\bold{P}_{k+1}\bold{B}(\bold{R}+\bold{B}^T\bold{P}_{k+1}\bold{B})^{-1}\bold{B}^T\bold{P}_{k+1}\bold{A},\tag{24}
\end{align}
\end{subequations}
which is known as the discrete-time algebraic Riccati equation (DARE). Based on this recursion, the sequence of matrices $\{\bold{P}_N,\cdots,\bold{P}_1\}$ can be computed offline using the terminal condition $V_N(\bold{e}) = \bold{e}_N^T \bold{Q}\bold{e}_N$. The corresponding optimal control law is then given by $\bold{u}_k^* = -\bold{K}_k\bold{e}_k$, where the feedback gain matrix is 
\begin{equation}\label{D2}\small
	\bold{K}_k = \left(\bold{R}+\bold{B}^T\bold{P}_{k+1}\bold{B}\right)^{-1}\bold{B}^T\bold{P}_{k+1}\bold{A}.
\end{equation}

However, the classical LQR framework is not directly applicable to the considered system for two main reasons. First, LQR is derived for unconstrained and deterministic linear systems, where the optimal feedback policy can be obtained in closed form via the Riccati recursion. The presence of nonlinear sensing and communication constraints destroys this algebraic structure, making it impossible to derive a valid closed-form feedback gain and thereby rendering the standard LQR solution inapplicable. Second, the horizon-wide cost in (\ref{LQR-1}) is unsuitable for our setting, because directly adopting it would lead to a global optimization problem that simultaneously involves $N$ sensing constraints and $N$ communication constraints, all of which are tightly coupled across time. This cross-horizon coupling significantly increases computational complexity and makes the resulting optimization problem impractical for real-time control.

To overcome these challenges, we propose an objective function that preserves the quadratic structure of LQR, incorporates the impact of stochastic noise through the expectation of future states, and adopts a receding-horizon optimization strategy inspired by model predictive control (MPC) over $N_0$ future time slots, where $N_0$ denotes the prediction horizon length, i.e.,
\begin{equation}\label{Obj} \small
	\mathcal{C}_{n}(\bold{u}_n)=\sum_{i=1}^{N_0}\ \left(\mathbb{E}_{\bold{n}_{s}}\left\{\bold{e}^T_{n+i}\bold{Q}\bold{e}_{n+i}\right\}+\bold{u}_{n+i-1}^T\bold{R}\bold{u}_{n+i-1}\right).\notag
\end{equation}
Accordingly, at the $n$-th time slot, the stochastic MPC problem for determining $\{\bold{u}_n,\bold{w}_{n+1}\}$ is formulated as
\vspace{-5pt}

\begin{subequations}\label{P1} \small
	\begin{alignat}{2}
		\underset{\{\bold{u}_{n+i-1},\bold{w}_{n+i}\}_{i\in\mathcal{I}}}{\text{min}} &\ \mathcal{C}_n(\bold{u}_n)\notag\\
		\text{s.t.}&\ R_{n+i} \ge R_{th},\forall i\in \mathcal{I},\label{P1-a}\\
		&\ \mathbb{E}_{\bold{n}_{s}}\left\{\frac{|\bold{a}^H_{n+i} \bold{w}_{n+i}|^2}{d_{n+i}^{4}}\right\} \ge \Gamma_{th},\forall i\in \mathcal{I},\label{P1-b}\\
		&\ \Vert\bold{w}_{n+i}\Vert^2 \le P_T,\forall i\in \mathcal{I},\label{P1-c}\\
		&\ \Vert\bold{u}_{n+i-1}\Vert \le a_{max}, \forall i\in \mathcal{I},\label{P1-d}\\
		&\ \Vert\bold{v}^{U}[n+i]\Vert\le V_{max}, \forall i\in \mathcal{I},\label{P1-e}
	\end{alignat}
\end{subequations}
where $\mathcal{I}\triangleq\{1,\cdots,N_0\}$, $d_{n+i}\triangleq d(\bold{p}^{U}[n+i],\bold{p}^{t}[n+i])$, and $\Gamma_{th}\triangleq\frac{\sigma_r^2}{GM_r\beta_r}\text{SNR}_{th}^{echo}$, with $\text{SNR}_{th}^{echo}$ being the threshold SNR of the received echo signal. Moreover, $R_{th}$ and $P_T$ denote the communication rate threshold and the transmit power budget, respectively. Constraint (\ref{P1-a}) guarantees the communication requirement, (\ref{P1-b}) ensures that the expected SNR of the received echo signal exceeds the threshold $\text{SNR}_{th}^{echo}$, (\ref{P1-c}) enforces the transmit power limit, and constraints (\ref{P1-d})-(\ref{P1-e}) impose physical limitations on the UAV motion by bounding its acceleration and velocity to $a_{max}$ and $V_{max}$, respectively.

In general, problem (\ref{P1}) is difficult to be solved efficiently due to the following challenges: 1) constraints (\ref{P1-a}) and (\ref{P1-b}) are highly non-convex and introduce strong coupling between the control variables $\{\bold{u}_{n+i-1}\}$ and the beamforming vectors $\{\bold{w}_{n+i}\}$; 2) the expectation term in (\ref{P1-b}) does not admit a closed-form expression; and 3) the true probability distributions of the system states $\{\bold{e}_{n+i}\}$ are unknown.

\section{Joint Optimization of Mobile Target Tracking Control and Wireless Beamforming}\label{sec:IV}
In this section, to address problem (\ref{P1}), we first apply an EKF method to estimate and predict the target state from the received sensing echoes. Using the current state estimate and the transmit steering vector approximated from the predicted target position, we then derive a closed-form sequence of optimal beamforming vectors for any given control input sequence by solving a sensing-centric feasibility problem. This closed-form characterization effectively decouples beamforming design from the control variables, thereby enabling an equivalent reformulation of the original problem into a more tractable control-driven one, which can subsequently be solved efficiently via relaxation-based convex approximation method.

Following this procedure at each time slot, the proposed ISCC framework is illustrated in Fig. \ref{fig2}. In particular, at the first time slot, the UAV jointly designs its control input and beamforming vector using the initial target state estimate acquired from external sensing devices.
\begin{figure}[t]
	\centering
	\includegraphics[width=0.95\linewidth]{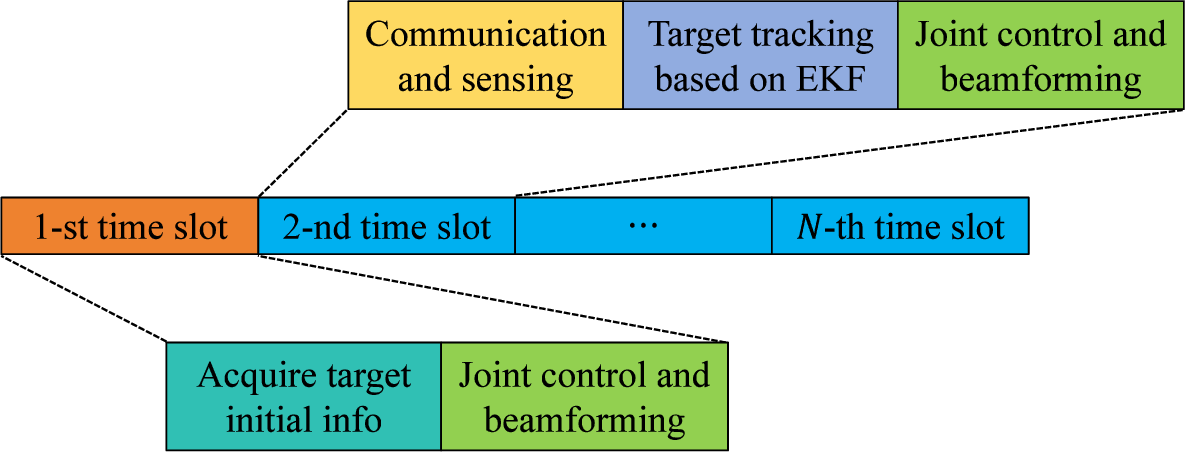}
	\caption{Illustration of the proposed ISCC framework.}
	\label{fig2}
	\vspace{0pt}
\end{figure}

\subsection{Extended Kalman Filtering}
Given the state evolution model in (\ref{S2}) and the measurement model in (13), we apply an EKF to estimate and predict the target state $\bold{s}^t[n]$. Since $\bold{f}(\bold{s}^t[n])$ is nonlinear with respect to $\bold{s}^t[n]$, we approximate it via linearization by computing its Jacobian matrix \cite{b6}, which is given in (\ref{Jacobian}) at the bottom of the next page, where $\Delta \bold{p}[n] \triangleq [\Delta p_x[n], \Delta p_y[n]]^T = \bold{p}^{U}[n] - \bold{p}^{t}[n]$, $\Delta \bold{v}[n] \triangleq [\Delta v_x[n], \Delta v_y[n]]^T = \bold{v}^{U}[n] - \bold{v}^{t}[n]$ and $\bm{\psi}_n \triangleq \bold{b}_n \bold{a}_n^H \bold{w}_n$. In (\ref{Jacobian}), it remains to compute $\frac{\partial (\bm{\psi}_nd_{n}^{-2})}{\partial p_x^t[n]}$ and $\frac{\partial (\bm{\psi}_nd_{n}^{-2})}{\partial p_y^t[n]}$. Using the chain rule, we obtain
\begin{equation}\label{noise}\small
	\left\{\begin{aligned}
		\frac{\partial (\bm{\psi}_nd_{n}^{-2})}{\partial p_x^t[n]} &= \frac{\partial \bm{\psi}_n}{\partial p_x^t[n]}d_{n}^{-2} + 2\Delta p_x[n]\bm{\psi}_nd_{n}^{-4},\notag\\
		\frac{\partial (\bm{\psi}_nd_{n}^{-2})}{\partial p_y^t[n]} &= \frac{\partial \bm{\psi}_n}{\partial p_y^t[n]}d_{n}^{-2} + 2\Delta p_y[n]\bm{\psi}_nd_{n}^{-4},\notag
	\end{aligned}
	\right.\notag
\end{equation}
where the expressions of $\frac{\partial \bm{\psi}_n}{\partial p_x^t[n]}$ and $\frac{\partial \bm{\psi}_n}{\partial p_y^t[n]}$ are given in (\ref{J_px}) and (\ref{J_py}), respectively, also at the bottom of the next page.
\begin{figure*}[b]  
	\centering
	\vspace*{0pt} 
	\hrulefill 
	
	\begin{minipage}{\textwidth}
		\centering
		\begin{equation}\label{Jacobian}\small
			\frac{\partial \bold{f}}{\partial \bold{s}^t[n]}=\begin{bmatrix}
				\frac{-\Delta p_x[n]}{d_n} & \frac{-\Delta p_y[n]}{d_n} & 0 & 0 \\
				\frac{2f_c\left(\Delta v_x[n]d_n^2+(\Delta\bold{p}[n])^T\Delta\bold{v}[n]\Delta p_x[n]\right)}{cd_n^3}, & \frac{2f_c\left(\Delta v_y[n]d_n^2+(\Delta\bold{p}[n])^T\Delta\bold{v}[n]\Delta p_y[n]\right)}{cd_n^3} & \frac{2f_c\Delta p_x[n]}{cd_n} & \frac{2f_c\Delta p_y[n]}{cd_n} \\
				\begin{bmatrix}
					\Re\left\{G\sqrt{\beta_r}\frac{\partial (\bm{\psi}_nd_{n}^{-2})}{\partial p_x^t[n]}\right\} \\
					\Im\left\{G\sqrt{\beta_r}\frac{\partial (\bm{\psi}_nd_{n}^{-2})}{\partial p_x^t[n]}\right\}
				\end{bmatrix} & \begin{bmatrix}
					\Re\left\{G\sqrt{\beta_r}\frac{\partial (\bm{\psi}_nd_{n}^{-2})}{\partial p_y^t[n]}\right\} \\
					\Im\left\{G\sqrt{\beta_r}\frac{\partial (\bm{\psi}_nd_{n}^{-2})}{\partial p_y^t[n]}\right\}
				\end{bmatrix} & \bold{0} & \bold{0} 
			\end{bmatrix}.
		\end{equation}
	\end{minipage}
	
	\vspace{2pt} 
	\hrulefill 
	
	\begin{minipage}{\textwidth}
		\centering
		\begin{equation}\label{J_px}\small
			\begin{aligned}
				\frac{\partial \bm{\psi}_n}{\partial p_x^t[n]} =& 
				j\pi \mathbf{b}_n \odot \Bigl( 
				\frac{-(\Delta p_y[n])^2-H^2}{d_n^3}([0,\dots,M_x^r-1]^T \otimes \mathbf{1}_{M_y^r})
				+\frac{\Delta p_x[n] \Delta p_y[n]}{d_n^3} (\mathbf{1}_{M_x^r} \otimes [0,\dots,M_y^r-1]^T)
				\Bigr) \mathbf{a}_n^H \mathbf{w}_n \\
				&+ \mathbf{b}_n
				\Biggl( j \pi\mathbf{a}_n\odot \Bigl(
				\frac{-(\Delta p_y[n])^2-H^2}{d_n^3} ([0,\dots,M_x^t-1]^T \otimes \mathbf{1}_{M_y^t})+\frac{\Delta p_x[n] \Delta p_y[n]}{d_n^3}(\mathbf{1}_{M_x^t} \otimes [0,\dots,M_y^t-1]^T )
				\Bigr) \Biggr)^H \mathbf{w}_n.
			\end{aligned}
		\end{equation}
	\end{minipage}
	
	\vspace{2pt}
	\hrulefill 
	
	\begin{minipage}{\textwidth}
		\centering
		\begin{equation}\label{J_py}\small
			\begin{aligned}
				\frac{\partial \bm{\psi}_n}{\partial p_y^t[n]} &= 
				j\pi \mathbf{b}_n \odot \Bigl( 
				\frac{\Delta p_x[n] \Delta p_y[n]}{d_n^3}([0,\dots,M_x^r-1]^T \otimes \mathbf{1}_{M_y^r})
				+\frac{-(\Delta p_x[n])^2-H^2}{d_n^3}(\mathbf{1}_{M_x^r} \otimes [0,\dots,M_y^r-1]^T)
				\Bigr) \mathbf{a}_n^H \mathbf{w}_n \\
				&\quad + \mathbf{b}_n
				\Biggl( j \pi\mathbf{a}_n\odot \Bigl(
				\frac{\Delta p_x[n] \Delta p_y[n]}{d_n^3}([0,\dots,M_x^t-1]^T \otimes \mathbf{1}_{M_y^t} )
				+\frac{-(\Delta p_x[n])^2-H^2}{d_n^3}(\mathbf{1}_{M_x^t} \otimes [0,\dots,M_y^t-1]^T )
				\Bigr) \Biggr)^H \mathbf{w}_n.
			\end{aligned}
		\end{equation}
	\end{minipage}
\end{figure*}

We are now ready to introduce the EKF scheme. For notational simplicity, let $\hat{\bold{s}}^t[n]$ and $\check{\bold{s}}^t[n]$ denote the estimated and predicted states of the target at the $n$-th time slot, respectively. Following the standard Kalman filtering procedure \cite{b7}, the target tracking design can be summarized as follows:

1) \textbf{State Prediction:}
\begin{equation}\label{EKF-1}\small 
	\check{\bold{s}}^t[n-1+i]=\bold{A}^i\hat{\bold{s}}^t[n-1], \forall i\in \mathcal{I},
\end{equation}

2) \textbf{Linearization:}
\begin{equation}\label{EKF-2}\small 
	\mathbf{F}_{n} = \left. \frac{\partial \bold{f}}{\partial \bold{s}^t[n]} \right|_{\bold{s}^t[n] = \check{\bold{s}}^t[n]},
\end{equation}

3) \textbf{MSE Matrix Prediction:}
\begin{equation}\label{EKF-3} \small
	\check{\mathbf{M}}_{n} = \mathbf{A} \hat{\mathbf{M}}_{n-1} \mathbf{A}^T + \mathbf{Q}_{s},
\end{equation}

4) \textbf{Kalman Gain Calculation:}
\begin{equation}\label{EKF-4} \small
	\mathbf{K}_{al,n} = \check{\mathbf{M}}_{n} \mathbf{F}_{n}^{H} \left(\mathbf{F}_{n} \check{\mathbf{M}}_{n} \mathbf{F}_{n}^{H} + \mathbf{Q}_{m} \right)^{-1},
\end{equation}

5) \textbf{State Tracking:}
\begin{equation}\label{EKF-5} \small
	\hat{\mathbf{s}}^t[n] = \check{\mathbf{s}}^t[n] + \mathbf{K}_{al,n} \left( \mathbf{m}_{n} - \mathbf{f} \left( \check{\mathbf{s}}^t[n] \right) \right),
\end{equation}

6) \textbf{MSE Matrix Update:}
\begin{equation}\label{EKF-6} \small
	\hat{\mathbf{M}}_{n} = \left( \mathbf{I} - \mathbf{K}_{al,n} \mathbf{F}_{n} \right) \check{\mathbf{M}}_{n},
\end{equation}
where $\check{\mathbf{M}}_{n}\triangleq \mathbb{E}\{(\bold{s}^t[n]-\check{\bold{s}}^t[n])(\bold{s}^t[n]-\check{\bold{s}}^t[n])^T\}$ and $\hat{\mathbf{M}}_{n}\triangleq \mathbb{E}\{(\bold{s}^t[n]-\hat{\bold{s}}^t[n])(\bold{s}^t[n]-\hat{\bold{s}}^t[n])^T\}$ denote the predicted and estimated error covariance matrices, respectively. In this EKF scheme, at the first time slot ($n=1$), no prediction of the target state or error covariance is required, as the target state estimate and its associated error covariance, $\{\hat{\mathbf{s}}^t[1], \hat{\mathbf{M}}_1\}$, are directly provided by external sensing devices. For subsequent slots ($n \ge 2$), the predicted target state and error covariance, $\{\check{\mathbf{s}}^t[n], \check{\mathbf{M}}_n\}$, together with the current measurement $\mathbf{m}_n$, are used to update both the state estimate $\hat{\mathbf{s}}^t[n]$ and its associated error covariance $\hat{\mathbf{M}}_n$. These updated quantities are then employed to characterize the distributions of the future target state tracking errors $\{\bold{e}_{n+i}\}_{i\in\mathcal{I}}$.

\subsection{Joint Control and Beamforming Design}
At the $n$-th time slot, after obtaining the state estimate $\hat{\mathbf{s}}^t[n]$ and its associated error covariance $\hat{\mathbf{M}}_n$ from the current measurement $\mathbf{m}_n$, our goal is to determine the control input $\bold{u}_n$ and the beamforming vector $\bold{w}_{n+1}$ by solving the MPC problem in (\ref{P1}). To this end, we model the target state $\bold{s}^t[n]$ as a Gaussian random variable with a known distribution, i.e.,
\begin{equation}\label{S-n+1}\small
	\bold{s}^t[n] = \check{\bold{s}}^t[n] + \bm{\xi}_{n}, \ \bm{\xi}_{n} \sim \mathcal{N}(\bold{0}, \hat{\mathbf{M}}_{n}).
\end{equation}
Based on (\ref{S-n+1}), the target state tracking error $\bold{e}_{n}$ can be expressed as
\begin{equation}\label{e}\small
	\bold{e}_{n}=\bold{s}^U[n]-\bold{s}^t[n]=\hat{\bold{e}}_n-\bm{\xi}_{n},
\end{equation}
where $\hat{\bold{e}}_n \triangleq \bold{s}^U[n]-\hat{\bold{s}}^t[n]$ denotes the estimated system state at the $n$-th time slot. By combining the state-space equation (\ref{S4}) with (\ref{e}), the distributions of the future target state tracking errors $\{\bold{e}_{n+i}\}_{i\in\mathcal{I}}$ can be characterized as
\vspace{-5pt}

\begin{subequations}\label{E-43}\small
\begin{align}
	\bold{e}_{n+i}=&\bold{A}^{i}\bold{e}_n+\sum_{j=0}^{i-1}\bold{A}^{i-j-1}(\bold{B}\bold{u}_{n+j}+\bold{n}_{s,j})\notag\\
	=&\bold{A}^{i}(\hat{\bold{e}}_n-\bm{\xi}_{n})+\sum_{j=0}^{i-1}\bold{A}^{i-j-1}(\bold{B}\bold{u}_{n+j}+\bold{n}_{s,j})\notag\\
	\triangleq& \bm{\Lambda}_{i}(\overline{\bold{u}}_{i}+\overline{\bold{n}}_{i}) ,\ \forall i\in\mathcal{I},\tag{38}
\end{align}
\end{subequations}
where 
\begin{equation}\small
	\bm{\Lambda}_{i}\triangleq \begin{bmatrix} \bold{A}^{i} & \bold{A}^{i-1} & \cdots & \bold{A} & \bold{I} \end{bmatrix},\notag
\end{equation}
\begin{equation}\small
	\overline{\bold{u}}_{i}\triangleq\begin{bmatrix} \hat{\bold{e}}_n^T& (\bold{B}\bold{u}_{n})^T & \cdots & (\bold{B}\bold{u}_{n+i-1})^T \end{bmatrix}^T,\notag
\end{equation}
\begin{equation}
	\overline{\bold{n}}_{i}\triangleq\begin{bmatrix} -\bm{\xi}_{n}^T & \bold{n}_{s,0}^T & \cdots & \bold{n}_{s,i-1}^T \end{bmatrix}^T.\notag
\end{equation}
Here, $\bold{n}_{s,j}$ denotes the process noise at the $(n+j)$-th time slot. Each $\bold{n}_{s,j}$ follows the same distribution as $\bold{n}_s$ in (\ref{S4}) and is independent across different time slots.

Although the distributions of target state tracking errors $\{\bold{e}_{n+i}\}_{i\in\mathcal{I}}$ are available, handling constraint (\ref{P1-b}) remains challenging because the noise term $\overline{\bold{n}}_{i}$ appears in both the transmit steering vector $\bold{a}_{n+i}$ and the denominator $d^4_{n+i}$. To mitigate this difficulty, we propose to approximate $\bold{a}_{n+i}$ by $\check{\bold{a}}_{n+i}\triangleq \bold{a}(\bold{p}^{U}[n+i],\check{\bold{p}}^{t}[n+i])$, where $\check{\bold{p}}^{t}[n+i]$ denotes the predicted target position at the $(n+i)$-th time slot. This approximation enables a closed-form representation of the sensing constraint. With this modification, problem (\ref{P1}) can be reformulated as
\vspace{-5pt}

\begin{subequations}\label{P2}\small
	\begin{alignat}{2}
		\underset{\{\bold{u}_{n+i-1},\bold{w}_{n+i}\}_{i\in\mathcal{I}}}{\text{min}} &\ \mathcal{C}_n(\bold{u}_n)\notag\\
		\text{s.t.}&\ \text{(\ref{P1-a})},\text{(\ref{P1-c})}\text{-}\text{(\ref{P1-e})},\notag\\
		&\ \frac{\vert\check{\bold{a}}_{n+i}\bold{w}_{n+i}\vert^2}{\mathbb{E}_{\overline{\bold{n}}_{i}}\left\{d_{n+i}^{4}\right\}} \ge \Gamma_{th}, \forall i\in\mathcal{I}.\label{P2-b}
	\end{alignat}
\end{subequations}
However, problem (\ref{P2}) is still challenging to solve due to the non-convex constraints (\ref{P1-a}) and (\ref{P2-b}).

To proceed, we observe that the objective function of problem (\ref{P2}) depends solely on the control input variables $\{\bold{u}_{n-1+i}\}_{i\in\mathcal{I}}$. Therefore, once these control inputs are specified, it suffices to determine whether there exists a corresponding set of beamforming vectors $\{\bold{w}_{n+i}\}_{i\in\mathcal{I}}$ that satisfies constraints (\ref{P1-a}) and (\ref{P1-c}) in problem (\ref{P2}). Conversely, the feasibility of $\{\bold{u}_{n-1+i}\}_{i\in\mathcal{I}}$ can be assessed by verifying the existence of such a beamforming vector set. Motivated by this observation, we can formulate feasibility-check problems for any specified control inputs. Depending on the performance metric of interest under the given control inputs $\{\bold{u}_{n-1+i}\}_{i\in\mathcal{I}}$, the feasibility-check formulation can be categorized into two cases, depending on whether the performance metric is sensing-centric or communication-centric.

\textbf{Sensing-centric feasibility check:} 
In this case, the feasibility-check problem is formulated as 
\vspace{-5pt}

\begin{subequations}\label{S-check}\small
	\begin{alignat}{2}
		\underset{\{\bold{w}_{n+i}\}_{i\in\mathcal{I}}}{\text{max}} &\ \Gamma\notag\\
		\text{s.t.}&\ \text{(\ref{P1-a})},\text{(\ref{P1-c})},\notag\\
		&\ \frac{\vert\check{\bold{a}}_{n+i}\bold{w}_{n+i}\vert^2}{\mathbb{E}_{\overline{\bold{n}}_{i}}\left\{d_{n+i}^{4}\right\}}- \Gamma_{th} \ge \Gamma, \forall i\in\mathcal{I}.\label{Sc-b}
	\end{alignat}
\end{subequations}
Note that in problem (\ref{S-check}), the optimization variables $\{\bold{w}_{n+1}, \dots, \bold{w}_{n+N_0}\}$ are mutually independent across different time slots. Consequently, it can be decomposed into $N_0$ parallel subproblems, each associated with a single time slot. The optimal objective value of the original problem is therefore given by the minimum among the optimal objective values of these $N_0$ subproblems. Specifically, the $i$-th subproblem can be formulated as
\vspace{-5pt}

\begin{subequations}\label{S-sub}\small
	\begin{alignat}{2}
		\underset{\bold{w}_{n+i}}{\text{max}} \ & |\check{\bold{a}}_{n+i}^H \bold{w}_{n+i}|^2 - \mathbb{E}_{\overline{\bold{n}}_{i}} \big\{d_{n+i}^4\big\} \Gamma_{th} \notag\\
		\text{s.t.}\ & R_{n+i} \ge R_{th}, \label{S-sub-a}\\
		& \Vert \bold{w}_{n+i}\Vert^2 \le P_T. \label{S-sub-b}
	\end{alignat}
\end{subequations}

Although problem (\ref{S-sub}) is non-convex, its optimal solution can be derived in closed form as\cite{I12}
\begin{equation}\label{E-49}\small
	\bold{w}^*_{n+i}=
	\begin{cases}
		\sqrt{P_T}\frac{\check{\bold{a}}_{n+i}}{\Vert\check{\bold{a}}_{n+i}\Vert}, &\text{if } P_T|\bold{a}_{c,i}^H\check{\bold{a}}_{n+i}|^2 \geq M_t\eta d_{c,i}^2, \\
		\kappa_1\bm{\nu}_1+\kappa_2\bm{\nu}_2,&\text{otherwise},
	\end{cases}
\end{equation}
where $\bold{a}_{c,i}\triangleq\bold{a}(\bold{p}^{U}[n+i],\bold{p}^{u})$, $d_{c,i}\triangleq d(\bold{p}^{U}[n+i],\bold{p}^u)$ and $\eta \triangleq \frac{\sigma_c^2(2^{R_{th}}-1)}{\beta_0}$. Moreover, the auxiliary terms $\kappa_1,\kappa_2,\bm{\nu}_1,\bm{\nu}_2$ are defined as 
\begin{equation}\small
	\bm{\nu}_1=\frac{\bold{a}_{c,i}}{\Vert\bold{a}_{c,i}\Vert},\bm{\nu}_2=\frac{\check{\bold{a}}_{n+i}-(\bm{\nu}^H_1\check{\bold{a}}_{n+i})\bm{\nu}_1}{\Vert\check{\bold{a}}_{n+i}-(\bm{\nu}^H_1\check{\bold{a}}_{n+i})\bm{\nu}_1\Vert},\notag
\end{equation}
\begin{equation}\small
	\kappa_1=\sqrt{\frac{\eta d_{c,i}^2}{M_t}}\frac{\bm{\nu}^H_1\check{\bold{a}}_{n+i}}{|\bm{\nu}^H_1\check{\bold{a}}_{n+i}|},\kappa_2=\sqrt{P_T-\frac{\eta d_{c,i}^2}{M_t}}\frac{\bm{\nu}^H_2\check{\bold{a}}_{n+i}}{|\bm{\nu}^H_2\check{\bold{a}}_{n+i}|}.\notag
\end{equation}

By substituting $\bold{w}^*_{n+i}$ in (\ref{E-49}) back into problem (\ref{S-sub}), the optimal achievable value of the beamforming-dependent term $\Gamma^*_{n+i}\triangleq\vert\check{\bold{a}}_{n+i}\bold{w}_{n+i}^*\vert^2$ is given by
\begin{equation}\label{Optimal}\small
	\Gamma^*_{n+i}=
	\begin{cases}
		\gamma, \ \ \text{if} \ \gamma\cos^2\theta_{i} \geq \eta d_{c,i}^2, \\
		\left(\sqrt{\eta d_{c,i}^2}\cos\theta_{i}+\sqrt{\gamma-\eta d_{c,i}^2}\sin\theta_{i}\right)^2,\text{otherwise},
	\end{cases}
\end{equation}
where $\gamma \triangleq M_tP_T$ and $\theta_{i}\triangleq\arccos\frac{\vert\check{\bold{a}}_{n+i}^H\bold{a}_{c,i}\vert}{\Vert\check{\bold{a}}_{n+i}\Vert\Vert\bold{a}_{c,i}\Vert}\in \left[0,\frac{\pi}{2}\right]$. Therefore, the optimal objective value of the overall sensing-centric feasibility-check problem (\ref{S-check}) can be expressed as
\begin{equation}\label{S-Gamma}\small
	\Gamma^* = \min_{i\in\mathcal{I}} \Big\{ \Gamma^*_{n+i} - \mathbb{E}_{\overline{\bold{n}}_{i}}\{d_{n+i}^4\} \Gamma_{th} \Big\},
\end{equation}  
and the control inputs $\{\bold{u}_{n-1+i}\}_{i\in\mathcal{I}}$ are feasible if and only if $\Gamma^* \ge 0$. 

\textbf{Communication-centric feasibility check:} 
In this case, the feasibility-check problem is formulated as 
\vspace{-5pt}

\begin{subequations}\label{C-check}\small
	\begin{alignat}{2}
		\underset{\{\bold{w}_{n+i}\}_{i\in\mathcal{I}}}{\text{max}} &\ R\notag\\
		\text{s.t.}&\ \text{(\ref{P1-c})},\text{(\ref{P2-b})},\notag\\
		&\ R_{n+i} - R_{th} \ge R,\forall i\in \mathcal{I}.\label{Cc-b}
	\end{alignat}
\end{subequations}
Similar to the sensing-centric case, problem (\ref{C-check}) can be decomposed into $N_0$ parallel subproblems, each admitting a closed-form optimal solution. The optimal value of problem (\ref{C-check}), denoted by $R^*$, is the minimum among the optimal values of these subproblems. Therefore, the control inputs $\{\bold{u}_{n-1+i}\}_{i\in\mathcal{I}}$ are feasible in the communication-centric case if and only if $R^* \ge 0$. For completeness, the detailed derivations of the optimal solution for problem (\ref{C-check}) are provided in Appendix A.

For a given set of control inputs $\{\bold{u}_{n-1+i}\}_{i\in\mathcal{I}}$, the sensing-centric and communication-centric feasibility-check problems adopt different evaluation criteria. To clarify the relationship between these two approaches and guide the selection of an appropriate check, we present the following Lemma \ref{lemma-1}.
\begin{lemma}\label{lemma-1}
	For any given control inputs $\{\bold{u}_{n-1+i}\}_{i\in\mathcal{I}}$, the sensing-centric and communication-centric feasibility-check problems, i.e., (\ref{S-check}) and (\ref{C-check}), yield identical outcomes, i.e.,
\begin{equation}\label{E-55}\small
	R^* \ge 0  \Longleftrightarrow  \Gamma^* \ge 0.
\end{equation}
\end{lemma}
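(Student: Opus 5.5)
Both feasibility checks ask the same question: does there exist, for every $i\in\mathcal{I}$, a beamformer $\bold{w}_{n+i}$ satisfying the power constraint (\ref{P1-c}), the rate constraint (\ref{P1-a}) and the sensing constraint (\ref{P2-b}) simultaneously? The plan is to show that each of $\Gamma^*\ge 0$ and $R^*\ge 0$ is equivalent to the nonemptiness of the per-slot feasible sets
\begin{equation*}\small
\mathcal{W}_i\triangleq\Big\{\bold{w}:\ \Vert\bold{w}\Vert^2\le P_T,\ R_{n+i}(\bold{w})\ge R_{th},\ \tfrac{|\check{\bold{a}}_{n+i}^H\bold{w}|^2}{\mathbb{E}_{\overline{\bold{n}}_{i}}\{d_{n+i}^4\}}\ge\Gamma_{th}\Big\},
\end{equation*}
for all $i\in\mathcal{I}$. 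The control inputs are fixed, so $\check{\bold{a}}_{n+i}$, $\bold{a}_{c,i}$, $d_{c,i}$ and $\mathbb{E}\{d_{n+i}^4\}$ are constants. The argument then works slot by slot, using the decomposition of (\ref{S-check}) into the subproblems (\ref{S-sub}) and the analogous decomposition of (\ref{C-check}).

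\emph{Sensing side.} By (\ref{S-Gamma}), $\Gamma^*\ge0$ holds if and only if, for every $i$, the optimal value of (\ref{S-sub}) is nonnegative (after the normalization by $\mathbb{E}\{d^4\}$, whose sign is irrelevant). Suppose subproblem $i$ is feasible with a nonnegative value. Then its maximizer $\bold{w}^*_{n+i}$ from (\ref{E-49}) satisfies the rate and power constraints and meets the sensing constraint, so it lies in $\mathcal{W}_i$. Conversely, any $\bold{w}\in\mathcal{W}_i$ is feasible for (\ref{S-sub}) with a nonnegative objective, so the optimum is nonnegative. One degenerate case needs attention: the rate constraint may be infeasible under the power budget, i.e. $\gamma<\eta d_{c,i}^2$ in (\ref{Optimal}). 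I would adopt the convention that the optimal value of an infeasible maximization is $-\infty$, so $\Gamma^*<0$, and then check that $\mathcal{W}_i=\emptyset$ in the same case.

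\emph{Communication side.} The same reasoning applies to the subproblems of (\ref{C-check}), whose closed form is given in Appendix A. $R^*\ge0$ holds if and only if each subproblem, which maximizes $R_{n+i}-R_{th}$ subject to the power and sensing constraints, is feasible with a nonnegative optimum. That holds exactly when some $\bold{w}$ satisfies the power, sensing and rate constraints together, i.e. when $\mathcal{W}_i\neq\emptyset$. The infeasible case, where the sensing constraint cannot be met even with full power $\gamma\ge\mathbb{E}\{d^4\}\Gamma_{th}$ violated, is again assigned $R^*=-\infty$.

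Chaining the equivalences over $i\in\mathcal{I}$ gives (\ref{E-55}). The main obstacle is not the existence argument, which is elementary, but making it rigorous against the particular closed forms. The cleanest route avoids the closed forms entirely and uses only that each subproblem attains its optimum, which follows from a continuous objective on a compact feasible set. As a sanity check, I would also verify directly from (\ref{Optimal}) and the Appendix A formula that the two optimal values cross zero on the same region of $(\theta_i,d_{c,i},\mathbb{E}\{d_{n+i}^4\})$. Both reduce to the same two-constraint feasibility condition on the two-dimensional span of $\bold{a}_{c,i}$ and $\check{\bold{a}}_{n+i}$. This also serves as a consistency check on the boundary convention used for infeasible subproblems.
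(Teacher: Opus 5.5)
Your proposal is correct and is essentially the paper's argument. Appendix~B likewise reduces the lemma to the per-slot equivalence $\Gamma^*_{n+i}\ge\Gamma_{d,i}\Leftrightarrow R^*_{n+i}\ge R_{th}$, and proves each direction by noting that a maximizer of one subproblem lies in the intersection of the two feasible sets (your $\mathcal{W}_i$) and therefore certifies the other, without using the closed forms; your remarks on attainment via compactness and on assigning $-\infty$ to infeasible subproblems only make explicit points the paper leaves implicit.
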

\begin{IEEEproof}
	Please see Appendix B.
\end{IEEEproof}

Based on Lemma \ref{lemma-1}, we adopt the check criterion $\Gamma^* \ge 0$ to equivalently reformulate problem (\ref{P2}) as
\vspace{-5pt}

\begin{subequations}\label{P3}\small
	\begin{alignat}{2}
		\underset{\{\bold{u}_{n+i-1}\}_{i\in\mathcal{I}}}{\text{min}} &\ \mathcal{C}_n(\bold{u}_n)\notag\\
		\text{s.t.}	&\ \text{(\ref{P1-d})},\text{(\ref{P1-e})},\notag\\
		&\ \Gamma^*_{n+i} - \mathbb{E}_{\overline{\bold{n}}_{i}}\{d_{n+i}^4\} \Gamma_{th} \ge 0, \forall i\in\mathcal{I}.\label{P3-b}
	\end{alignat}
\end{subequations}
However, problem (\ref{P3}) remains challenging due to the piece-wise non-concave nature of $\Gamma^*_{n+i}$. To address this issue, we derive a lower bound of $\Gamma^*_{n+i}$ as follows.
\begin{lemma}\label{lemma-2}
	The optimal achievable value of the beamforming-dependent term $\Gamma^*_{n+i}$ in (\ref{Optimal}) admits the following lower bound:
	\begin{equation}\label{lower}\small
		\Gamma^*_{n+i} \ge \gamma - \delta_i\eta d^2_{c,i} \triangleq \Gamma^l_{n+i},\ \forall i\in\mathcal{I},
	\end{equation}
where $\delta_i$ is a binary constant defined as
\begin{equation}\label{0-1}\small
	\delta_i = 
	\begin{cases}
		0, \ \text{if}\ \ \check{\bold{p}}^{t}[n+i]=\bold{p}^u,\\
		1,\ \text{otherwise}.
	\end{cases}\notag
\end{equation}
Moreover, the lower bound $\Gamma^l_{n+i}$ in (\ref{lower}) becomes asymptotically tight as the number of transmit antennas $M_t\rightarrow\infty$.
\end{lemma}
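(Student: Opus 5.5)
The plan is to work directly from the closed-form expression (\ref{Optimal}) for $\Gamma^*_{n+i}$. I would rewrite its second branch as a single cosine, show that $\Gamma^*_{n+i}\ge \gamma-\eta d_{c,i}^2$ in both branches, and then treat the special case $\delta_i=0$ separately. Throughout I assume that the rate constraint (\ref{S-sub-a}) is feasible at full power, i.e., $\eta d_{c,i}^2\le\gamma$. This is exactly what makes $\kappa_2$ in (\ref{E-49}) well defined.

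\textbf{Step 1 (first branch).} If $\gamma\cos^2\theta_i\ge\eta d_{c,i}^2$, then $\Gamma^*_{n+i}=\gamma$. This is at least $\gamma-\delta_i\eta d_{c,i}^2$ for either value of $\delta_i\in\{0,1\}$, so nothing more is needed.

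\textbf{Step 2 (second branch).} Suppose instead $\gamma\cos^2\theta_i<\eta d_{c,i}^2$. Define $\varphi_i\in[0,\pi/2]$ by $\cos\varphi_i=\sqrt{\eta d_{c,i}^2/\gamma}$, so that $\sin\varphi_i=\sqrt{(\gamma-\eta d_{c,i}^2)/\gamma}$. Then
\[
\sqrt{\eta d_{c,i}^2}\cos\theta_i+\sqrt{\gamma-\eta d_{c,i}^2}\sin\theta_i=\sqrt{\gamma}\cos(\theta_i-\varphi_i).
\]
The branch condition reads $\cos\theta_i<\cos\varphi_i$. Since both angles lie in $[0,\pi/2]$, this gives $\theta_i-\varphi_i\in(0,\pi/2-\varphi_i]$. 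Cosine is decreasing on this interval, so
\[
\cos(\theta_i-\varphi_i)\ge\cos(\pi/2-\varphi_i)=\sin\varphi_i .
\]
Squaring (both sides are nonnegative) yields $\Gamma^*_{n+i}\ge\gamma\sin^2\varphi_i=\gamma-\eta d_{c,i}^2$.

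\textbf{Step 3 (the case $\delta_i=0$).} If $\check{\bold{p}}^{t}[n+i]=\bold{p}^u$, then $\check{\bold{a}}_{n+i}=\bold{a}_{c,i}$, so $\theta_i=0$. The first-branch condition becomes $\gamma\ge\eta d_{c,i}^2$, which holds by the feasibility assumption. Hence $\Gamma^*_{n+i}=\gamma=\Gamma^l_{n+i}$, and the bound holds with equality. Combining Steps 1--3 gives (\ref{lower}) with the stated binary $\delta_i$.

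\textbf{Step 4 (asymptotic tightness) — the main obstacle.} For $\check{\bold{p}}^{t}[n+i]\neq\bold{p}^u$, the direction cosines $(\Phi,\Omega)$ of the two UPA steering vectors differ. By the Kronecker structure (\ref{Ts}), $\cos\theta_i$ factors into a product of two normalized Dirichlet kernels. When the directions are distinct, I expect this product to decay like $O(1/M_x^t)$ or $O(1/M_y^t)$, so that $\cos\theta_i\to0$. On the other side, $\gamma=M_tP_T\to\infty$ while $\eta d_{c,i}^2$ stays fixed.

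Consequently, for large $M_t$ the first branch fails and $\Gamma^*_{n+i}$ is given by the second branch. Expanding the square, the gap is
\[
\Gamma^*_{n+i}-\Gamma^l_{n+i}=\eta d_{c,i}^2\cos^2\theta_i-(\gamma-\eta d_{c,i}^2)\cos^2\theta_i+2\sqrt{\eta d_{c,i}^2(\gamma-\eta d_{c,i}^2)}\cos\theta_i\sin\theta_i .
\]
\begin{itemize}
\item The ratio $\Gamma^*_{n+i}/\Gamma^l_{n+i}\to1$ follows immediately, because $\gamma\to\infty$ while the gap is $o(\gamma)$.
\item Showing that the absolute gap vanishes requires $\gamma\cos^2\theta_i\to0$ and $\sqrt{\gamma}\cos\theta_i\to0$. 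This in turn requires the decay rate of $\cos\theta_i$ to beat $M_t^{-1/2}$.
\end{itemize}
The second point is where care is needed. I would first establish tightness in the ratio sense. I would then attempt the stronger absolute statement using the $O(1/M)$ kernel decay, under the assumption that the array grows along both axes so that $M_x^t$ and $M_y^t$ each scale with $M_t$. If the array grows along only one axis, the decay may be slower and only the ratio statement would be claimed.
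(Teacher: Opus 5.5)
Your lower-bound argument in Steps 1--3 is the paper's argument. It uses the same auxiliary angle $\phi_i=\arccos\sqrt{\eta d_{c,i}^2/\gamma}$, the same rewriting of the second branch as $\gamma\cos^2(\theta_i-\phi_i)$, the same monotonicity of cosine on $[0,\pi/2-\phi_i]$, and the same $\theta_i=0$ argument for $\delta_i=0$. You are more explicit than the paper in two places it leaves tacit. One is the first branch when $\delta_i=1$. The other is the feasibility condition $\eta d_{c,i}^2\le\gamma$: without it, $\theta_i=0$ would not place you in the first branch, and (\ref{Optimal}) would not even be defined.

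For tightness your route differs from the paper's, and it is more careful. The paper shows $\cos\theta_i\to0$ via the Dirichlet-kernel factorization and then concludes $d_{if}=\gamma\cos^2(\theta_i-\phi_i)-\gamma\sin^2\phi_i\to0$. In doing so it tacitly holds $\gamma$ and $\phi_i$ fixed, although $\gamma=M_tP_T\to\infty$. You correctly see that the absolute gap is governed by $\gamma\cos^2\theta_i$ and $\sqrt{\gamma}\cos\theta_i$.

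Two corrections, though.

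\emph{The ratio statement needs none of Step 4.} Since $\Gamma^*_{n+i}=|\check{\mathbf a}_{n+i}^H\mathbf w^*_{n+i}|^2\le M_tP_T=\gamma$, you get $0\le\Gamma^*_{n+i}-\Gamma^l_{n+i}\le\delta_i\eta d_{c,i}^2$ in both branches. This bound is independent of $M_t$, so $\Gamma^*_{n+i}/\Gamma^l_{n+i}\to1$ for every geometry. Relatedly, your claim that the first branch eventually fails does not follow from $\cos\theta_i\to0$ and $\gamma\to\infty$ alone; it needs $\gamma\cos^2\theta_i\to0$.

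\emph{Your criterion for the absolute statement is misplaced.} The relevant dichotomy is not one-axis versus two-axis growth but full versus partial misalignment.
\begin{itemize}
\item If $\Delta_{\Phi,i}\neq0$ and $\Delta_{\Omega,i}\neq0$, each normalized kernel is at most $1/(M|\sin(\pi\Delta/2)|)$. Hence $\cos\theta_i\le c_i/M_t$ with $c_i=1/|\sin(\pi\Delta_{\Phi,i}/2)\sin(\pi\Delta_{\Omega,i}/2)|$, whatever the array aspect ratio. This gives $\gamma\cos^2\theta_i\le P_Tc_i^2/M_t\to0$, so absolute tightness holds with no growth assumption at all, including growth along a single axis.
\item If instead, say, $\Delta_{\Phi,i}=0$, then $\gamma\cos^2\theta_i=P_T\frac{M_x^t}{M_y^t}\frac{\sin^2(\pi M_y^t\Delta_{\Omega,i}/2)}{\sin^2(\pi\Delta_{\Omega,i}/2)}$. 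This in general vanishes only when $M_x^t/M_y^t\to0$, e.g., when only the misaligned dimension grows. Under the two-axis growth you propose (e.g., $M_x^t=M_y^t$), it stays of order one along a subsequence. The absolute gap then stays bounded away from zero: it equals $\eta d_{c,i}^2$ in the first branch, and is roughly $\sqrt{c}\,(2\sqrt{\eta d_{c,i}^2}-\sqrt{c})$ with $c=\gamma\cos^2\theta_i$ in the second.
\end{itemize}
So the assumption you would impose is unnecessary in the generic case and is exactly the wrong one in the partially aligned case, where only the ratio statement survives.
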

\begin{IEEEproof}
	Please see Appendix C.
\end{IEEEproof}

Based on Lemma \ref{lemma-2}, a relaxed version of problem (\ref{P3}) is obtained by replacing $\Gamma^*_{n+i}$ with $\Gamma^l_{n+i}$, i.e.,
\vspace{-5pt}

\begin{subequations}\label{P4}\small
	\begin{alignat}{2}
		\underset{\{\bold{u}_{n+i-1}\}_{i\in\mathcal{I}}}{\text{min}} &\ \sum_{i=1}^{N_0}\ \left(\mathbb{E}_{\overline{\bold{n}}_{i}}\left\{\bold{e}^T_{n+i}\bold{Q}\bold{e}_{n+i}\right\}+\bold{u}_{n+i-1}^T\bold{R}\bold{u}_{n+i-1}\right)\notag\\
		\text{s.t.}	&\ \text{(\ref{P1-d})},\text{(\ref{P1-e})},\notag\\
		&\ \delta_i\eta d^2_{c,i}+\mathbb{E}_{\overline{\bold{n}}_{i}}\{d_{n+i}^4\} \Gamma_{th} -\gamma \le 0, \forall i\in\mathcal{I}.\label{P4-b}
	\end{alignat}
\end{subequations}
Accordingly, a high-quality solution of the original problem (\ref{P3}) can be obtained by solving the relaxed problem (\ref{P4}). 

To this end, we ought to derive the closed-form expressions of $\mathbb{E}_{\overline{\bold{n}}_{i}}\left\{\bold{e}^T_{n+i}\bold{Q}\bold{e}_{n+i}\right\}$, $\mathbb{E}_{\overline{\bold{n}}_{i}}\{d_{n+i}^4\}$ and $d^2_{c,i}$. First, based on the distribution of $\bold{e}_{n+i}$ given in (\ref{E-43}), we have
\vspace{-5pt}

\begin{subequations}\label{E-59}\small
	\begin{alignat}{2}
		\mathbb{E}_{\overline{\bold{n}}_{i}}\left\{\bold{e}^T_{n+i}\bold{Q}\bold{e}_{n+i}\right\}&=\mathbb{E}_{\overline{\bold{n}}_{i}}\left\{\left(\bm{\Lambda}_{i}(\overline{\bold{u}}_{i}+\overline{\bold{n}}_{i})\right)^T\bold{Q}\bm{\Lambda}_{i}(\overline{\bold{u}}_{i}+\overline{\bold{n}}_{i})\right\}\notag\\
		&=\overline{\bold{u}}_{i}^T\bm{\Lambda}_{i}^T\bold{Q}\bm{\Lambda}_{i}\overline{\bold{u}}_{i} + \mathbb{E}_{\overline{\bold{n}}_{i}}\left\{\overline{\bold{n}}_{i}^T\bm{\Lambda}_{i}^T\bold{Q}\bm{\Lambda}_{i}\overline{\bold{n}}_{i}\right\}\notag\\
		&=\overline{\bold{u}}_{i}^T\bm{\Lambda}_{i}^q\overline{\bold{u}}_{i} + \text{Tr}\left(\overline{\bold{N}}_{i}\bm{\Lambda}_{i}^q\right),\tag{50}
	\end{alignat}
\end{subequations}
where $\bm{\Lambda}_{i}^q\triangleq\bm{\Lambda}_{i}^T\bold{Q}\bm{\Lambda}_{i}\succeq 0$ and $\overline{\bold{N}}_{i}\triangleq\mathbb{E}_{\overline{\bold{n}}_{i}}\left\{\overline{\bold{n}}_{i}\overline{\bold{n}}_{i}^T\right\}\succeq 0$. Next, since $d_{n+i}^2=\Vert\bold{p}^{U}[n+i]-\bold{p}^{t}[n+i]\Vert^2+H^2$, the fourth-order moment $\mathbb{E}_{\overline{\bold{n}}_{i}}\{d_{n+i}^4\}$ can be expressed in closed form as shown in (\ref{E-60}) at the bottom of this page,
\begin{figure*}[b]  
	\centering
	\vspace*{0pt} 
	\hrulefill 
	
	\begin{minipage}{\textwidth}
		\centering
		\begin{subequations}\label{E-60}\small
			\begin{alignat}{2}
				\mathbb{E}_{\overline{\bold{n}}_{i}}\left\{d_{n+i}^4\right\}
				=& \mathbb{E}_{\overline{\bold{n}}_{i}}\Big\{\big(\Vert\bold{p}^{U}[n+i]-\bold{p}^{t}[n+i]\Vert^2+H^2\big)^2\Big\}\notag\\
				=& (\overline{\bold{u}}_{i}^T \bm{\Lambda}_{i}^c \overline{\bold{u}}_{i})^2 
				+ 4 \overline{\bold{u}}_{i}^T \bm{\Lambda}_{i}^c \overline{\bold{N}}_{i} \bm{\Lambda}_{i}^c \overline{\bold{u}}_{i} 
				+ 2 (\overline{\bold{u}}_{i}^T \bm{\Lambda}_{i}^c \overline{\bold{u}}_{i}) \mathrm{Tr}(\bm{\Lambda}_{i}^c \overline{\bold{N}}_{i}) \notag\\
				&+ (\mathrm{Tr}(\bm{\Lambda}_{i}^c \overline{\bold{N}}_{i}))^2 
				+ 2 \mathrm{Tr}\big((\bm{\Lambda}_{i}^c \overline{\bold{N}}_{i})^2\big)
				+ 2 H^2 \Big(\overline{\bold{u}}_{i}^T \bm{\Lambda}_{i}^c \overline{\bold{u}}_{i} + \mathrm{Tr}(\bm{\Lambda}_{i}^c \overline{\bold{N}}_{i})\Big)
				+ H^4.	\tag{51}
			\end{alignat}
		\end{subequations}
	\end{minipage}
	\vspace{-5pt}
\end{figure*}
where $\bm{\Lambda}_{i}^c\triangleq\bm{\Lambda}_{i}^T\bold{C}^T\bold{C}\bm{\Lambda}_{i}\succeq 0$, and the constant matrix $\bold{C}$ is defined as
\begin{equation}\label{E-61}\small
	\bold{C}=\begin{bmatrix}
		1 & 0 & 0 & 0 \\
		0 & 1 & 0 & 0
	\end{bmatrix}.\notag
\end{equation}
Finally, to derive a closed-form expression for $d_{c,i}^2$, we express the UAV position $\bold{p}^U[n+i]$ using (\ref{S1}) as
\vspace{-5pt}

\begin{subequations}\label{E-UAV}\small
\begin{align}
	\bold{p}^U[n+i] &= \bold{C}\left(\bold{A}^{i} \bold{s}^U[n] + \sum_{j=0}^{i-1} \bold{A}^{i-j-1} \bold{B} \bold{u}_{n+j}\right) \notag\\
	&= \bold{C}\bm{\Lambda}_{i} \tilde{\bold{u}}_{i}, \ \ \forall i \in\mathcal{I},\tag{52}
\end{align}
\end{subequations}
where
$\tilde{\bold{u}}_{i} \triangleq
	\begin{bmatrix} 
		(\bold{s}^U[n])^T& (\bold{B} \bold{u}_{n})^T &\cdots & (\bold{B} \bold{u}_{n+i-1})^T\end{bmatrix}^T$. Based on (\ref{E-UAV}), $d_{c,i}^2$ can be expressed as
\begin{subequations}\label{E-63}\small
	\begin{align}
		d_{c,i}^2 
		&= \Vert \bold{C} \bm{\Lambda}_{i} \tilde{\bold{u}}_{i} - \bold{p}^{u} \Vert^2 + H^2 \notag\\
		&= \tilde{\bold{u}}_{i}^T \bm{\Lambda}_{i}^c \tilde{\bold{u}}_{i} 
		- 2 \tilde{\bold{u}}_{i}^T \bm{\Lambda}_{i}^T \bold{C}^T \bold{p}^{u} 
		+ (\bold{p}^{u})^T \bold{p}^{u} + H^2. \tag{53}
	\end{align}
\end{subequations}

To make the structure of problem (\ref{P4}) more explicit, we stack all control inputs $\{\bold{u}_{n-1+i}\}_{i\in\mathcal{I}}$ into a single vector, i.e.,
\begin{equation}\label{E-64}\small
	\hat{\bold{u}}_{n} \triangleq 
	\begin{bmatrix} \bold{u}_{n}^T & 
		\bold{u}_{n+1}^T&
		\cdots & 
		\bold{u}_{n+N_0-1}^T 
	\end{bmatrix}^T.
\end{equation}
With (\ref{E-64}), both $\overline{\bold{u}}_{i}$ and $\tilde{\bold{u}}_{i}$ can be expressed as the following known linear selections of $\hat{\bold{u}}_{n}$:
\begin{equation}\label{E-65}\small
	\overline{\bold{u}}_{i} = \bold{S}_{i}\hat{\bold{u}}_{n}+\overline{\bold{c}}_{i},\ \tilde{\bold{u}}_{i} = \bold{S}_{i}\hat{\bold{u}}_{n}+\tilde{\bold{c}}_{i},\forall i\in \mathcal{I},
\end{equation}
where $\overline{\mathbf {c}}_{i}=
\begin{bmatrix}\hat{\mathbf e}_n\\[4pt]\mathbf 0_{\,4 i}\end{bmatrix} \in \mathbb{R}^{(4+4i)\times 1}$, $\tilde{\mathbf c}_{i}=
\begin{bmatrix}\mathbf s^{U}[n]\\[4pt]\mathbf 0_{\,4 i}\end{bmatrix}\in \mathbb{R}^{(4+4i)\times 1}$, and the selection matrix $\bold{S}_{i}$ is given by
\begin{equation}\label{E-66}\small
	\bold{S}_{i}=
	\begin{bmatrix}
		\mathbf 0_{\,4\times (2 N_0)}\\[6pt]
		\big(\bold{I}_{i}\otimes \mathbf B\big)\quad  \mathbf 0_{\,4 i \times 2 (N_0-i)}
	\end{bmatrix}\in \mathbb{R}^{(4+4i) \times 2N_0}.\notag
\end{equation}
Then, by substituting (\ref{E-59}), (\ref{E-UAV}), (\ref{E-63}) and (\ref{E-65}) into (\ref{P4}) and performing some proper algebraic simplifications, problem (\ref{P4}) is reformulated as
\vspace{-5pt}

\begin{subequations}\label{P5}\small
	\begin{alignat}{2}
		\underset{\hat{\bold{u}}_{n}}{\text{min}} &\ \hat{\bold{u}}_{n}^{T} \bm{\Upsilon} \hat{\bold{u}}_{n}
		+ 2\mathbf{g}^{T} \hat{\bold{u}}_{n} + c_t,\notag\\
		\text{s.t.}&\ \Vert \mathbf{E}_i \hat{\bold{u}}_n \Vert \le a_{max}, \ \forall i \in \mathcal{I}, \label{P5-b}\\
		&\ \Vert \bold{v}^{U}[n] + \delta_t \mathbf{E}_{i,sum} \hat{\bold{u}}_n \Vert \le V_{max}, \ \forall i \in \mathcal{I}, \label{P5-c}\\
		&\ \Gamma_{th}\left((\bold{S}_{i}\hat{\bold{u}}_{n}+\overline{\bold{c}}_{i})^T \bm{\Lambda}_{i}^c (\bold{S}_{i}\hat{\bold{u}}_{n}+\overline{\bold{c}}_{i})\right)^2+ \hat{\bold{u}}_{n}^{T} \bm{\Xi}_{i} \hat{\bold{u}}_{n} \notag\\
		&\quad+ 2\bm{\zeta}_{i}^{T} \hat{\bold{u}}_{n} + \varpi_{i}\le 0,\  \forall i \in\mathcal{I},\label{P5-a}
	\end{alignat}
\end{subequations}
where $\mathbf{E}_i\triangleq \begin{bmatrix} \bold{0}_{2\times (2i-2)} &
	\bold{I}_{2}&
	\bold{0}_{2\times (2N_s-2i)} 
\end{bmatrix}\in \mathbb{R}^{2\times 2N_s}$, $\mathbf{E}_{i,sum}\triangleq\left(\sum_{j=1}^{i} \mathbf{E}_{j}\right)$, $\bm{\Upsilon} \triangleq \sum_{i=1}^{N_0} \mathbf{S}_{i}^{T} \bm{\Lambda}_{i}^{q} \mathbf{S}_{i}+ \mathbf{I}_{N_0} \otimes \mathbf{R}$, $\mathbf{g} \triangleq \sum_{i=1}^{N_0} \mathbf{S}_{i}^{T} \bm{\Lambda}_{i}^{q} \overline{\mathbf{c}}_{i}$ and $c_t \triangleq \sum_{i=1}^{N_0} \Big( \overline{\mathbf{c}}_{i}^{T} \bm{\Lambda}_{i}^{q}\overline{\mathbf{c}}_{i} 
+ \mathrm{Tr}(\overline{\mathbf{N}}_{i} \bm{\Lambda}_{i}^{q}) \Big)$. Moreover, the coefficients $\bm{\Xi}_{i}$, $\bm{\zeta}_{i}$ and $\varpi_{i}$ in constraint (\ref{P5-a}) are defined in (\ref{E-70}), (\ref{E-71}), and (\ref{E-72}), respectively, as shown at the bottom of the next page.

\begin{figure*}[b]  
	\centering
	\vspace*{0pt} 
	\hrulefill 
	
	\begin{minipage}{\textwidth}
		\centering
		\begin{subequations}\label{E-70}\small
			\begin{alignat}{2}
				\bm{\Xi}_i &= \delta_i\eta \mathbf{S}_i^T \bm{\Lambda}_i^c \mathbf{S}_i 
				+ 4 \Gamma_{th} \mathbf{S}_i^T \bm{\Lambda}_i^c \overline{\mathbf{N}}_i \bm{\Lambda}_i^c \mathbf{S}_i
				+ 2 \Gamma_{th} \left(H^2+\mathrm{Tr}(\bm{\Lambda}_i^c \overline{\mathbf{N}}_i)\right) \mathbf{S}_i^T \bm{\Lambda}_i^c \mathbf{S}_i.\tag{57}
			\end{alignat}
		\end{subequations}
	\end{minipage}
	
	\vspace{2pt} 
	\hrulefill 
	
	\begin{minipage}{\textwidth}
		\centering
		\begin{subequations}\label{E-71}\small
			\begin{alignat}{2}
				\bm{\zeta}_i &= \delta_i\eta \mathbf{S}_i^T (\bm{\Lambda}_i^c \tilde{\mathbf{c}}_i - \bm{\Lambda}_i^T \mathbf{C}^T \mathbf{p}^u)
				+ 4 \Gamma_{th} \mathbf{S}_i^T \bm{\Lambda}_i^c \overline{\mathbf{N}}_i \bm{\Lambda}_i^c \overline{\mathbf{c}}_i
				+ 2 \Gamma_{th} \left(H^2+\mathrm{Tr}(\bm{\Lambda}_i^c \overline{\mathbf{N}}_i)\right) \mathbf{S}_i^T \bm{\Lambda}_i^c \overline{\mathbf{c}}_i.\tag{58}
			\end{alignat}
		\end{subequations}
	\end{minipage}
	
	\vspace{2pt} 
	\hrulefill 
	
	\begin{minipage}{\textwidth}\small
		\centering
		\begin{subequations}\label{E-72}
			\begin{alignat}{2}
			\varpi_i &=\Gamma_{th} \Big[4 \overline{\mathbf{c}}_i^T \bm{\Lambda}_i^c \overline{\mathbf{N}}_i \bm{\Lambda}_i^c \overline{\mathbf{c}}_i 
			+ 2 (\overline{\mathbf{c}}_i^T \bm{\Lambda}_i^c \overline{\mathbf{c}}_i) \mathrm{Tr}(\bm{\Lambda}_i^c \overline{\mathbf{N}}_i)  + (\mathrm{Tr}(\bm{\Lambda}_i^c \overline{\mathbf{N}}_i))^2 
			+ 2 \mathrm{Tr}((\bm{\Lambda}_i^c \overline{\mathbf{N}}_i)^2) 
			+ 2 H^2 (\overline{\mathbf{c}}_i^T \bm{\Lambda}_i^c \overline{\mathbf{c}}_i + \mathrm{Tr}(\bm{\Lambda}_i^c \overline{\mathbf{N}}_i))+H^4\Big]\notag\\
			&\quad+\delta_i\eta (\tilde{\mathbf{c}}_i^T \bm{\Lambda}_i^c \tilde{\mathbf{c}}_i - 2 \tilde{\mathbf{c}}_i^T \bm{\Lambda}_i^T \mathbf{C}^T \mathbf{p}^u + (\mathbf{p}^u)^T \mathbf{p}^u + H^2)- \gamma.\tag{59}
			\end{alignat}
		\end{subequations}
	\end{minipage}
\end{figure*}
For problem (\ref{P5}), we have the following Theorem \ref{theorem-1}.
\begin{theorem}\label{theorem-1}
	Problem (\ref{P5}) is convex.
\end{theorem}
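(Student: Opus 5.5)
To establish Theorem~\ref{theorem-1}, we will show that the objective of (\ref{P5}) is a convex quadratic and that each constraint set (\ref{P5-b})--(\ref{P5-a}) is convex. Problem (\ref{P5}) is then a convex program, since the intersection of convex sets is convex.

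\emph{Objective.} We first show $\bm{\Upsilon}\succeq 0$. Each $\bm{\Lambda}_i^q=\bm{\Lambda}_i^T\bold{Q}\bm{\Lambda}_i\succeq 0$ because $\bold{Q}\succ 0$. Hence $\mathbf{S}_i^T\bm{\Lambda}_i^q\mathbf{S}_i\succeq 0$, since congruence preserves positive semidefiniteness. In addition, $\mathbf{I}_{N_0}\otimes\mathbf{R}\succ 0$ because $\mathbf{R}\succ 0$. A sum of PSD matrices is PSD, so the objective is a convex (indeed strictly convex) quadratic in $\hat{\bold{u}}_n$.

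\emph{Constraints (\ref{P5-b}) and (\ref{P5-c}).} The maps $\hat{\bold{u}}_n\mapsto \mathbf{E}_i\hat{\bold{u}}_n$ and $\hat{\bold{u}}_n\mapsto \bold{v}^U[n]+\delta_t\mathbf{E}_{i,sum}\hat{\bold{u}}_n$ are affine. The Euclidean norm of an affine function is convex, so both are second-order-cone constraints and define convex sets.

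\emph{Constraint (\ref{P5-a}).} This is the main step. We write its left-hand side as
\[
q_i(\hat{\bold{u}}_n)\triangleq\Gamma_{th}\,\phi_i(\hat{\bold{u}}_n)^2+\hat{\bold{u}}_n^T\bm{\Xi}_i\hat{\bold{u}}_n+2\bm{\zeta}_i^T\hat{\bold{u}}_n+\varpi_i,
\qquad
\phi_i(\hat{\bold{u}}_n)\triangleq(\bold{S}_i\hat{\bold{u}}_n+\overline{\bold{c}}_i)^T\bm{\Lambda}_i^c(\bold{S}_i\hat{\bold{u}}_n+\overline{\bold{c}}_i).
\]
It suffices to show that $q_i$ is convex, since a sublevel set of a convex function is convex.

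For the first term, $\bm{\Lambda}_i^c=\bm{\Lambda}_i^T\bold{C}^T\bold{C}\bm{\Lambda}_i\succeq 0$, so $\phi_i$ is a convex quadratic of an affine map and satisfies $\phi_i\ge 0$. The function $t\mapsto t^2$ is convex and nondecreasing on $[0,\infty)$. By the standard composition rule, $\phi_i^2$ is convex. Since $\Gamma_{th}>0$, the term $\Gamma_{th}\phi_i^2$ is convex.

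For the remaining terms, we show $\bm{\Xi}_i\succeq 0$ in (\ref{E-70}) term by term.
\begin{itemize}
\item $\delta_i\eta\,\mathbf{S}_i^T\bm{\Lambda}_i^c\mathbf{S}_i\succeq 0$, because $\delta_i\in\{0,1\}$, $\eta\ge 0$ (as $R_{th}\ge 0$), and $\bm{\Lambda}_i^c\succeq 0$.
\item $4\Gamma_{th}\mathbf{S}_i^T\bm{\Lambda}_i^c\overline{\mathbf{N}}_i\bm{\Lambda}_i^c\mathbf{S}_i=4\Gamma_{th}(\bm{\Lambda}_i^c\mathbf{S}_i)^T\overline{\mathbf{N}}_i(\bm{\Lambda}_i^c\mathbf{S}_i)\succeq 0$. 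This uses the symmetry of $\bm{\Lambda}_i^c$ and $\overline{\mathbf{N}}_i\succeq 0$, the latter holding because it is a covariance matrix.
\item $2\Gamma_{th}\big(H^2+\mathrm{Tr}(\bm{\Lambda}_i^c\overline{\mathbf{N}}_i)\big)\mathbf{S}_i^T\bm{\Lambda}_i^c\mathbf{S}_i\succeq 0$. The scalar coefficient is nonnegative because the trace of a product of two PSD matrices is nonnegative.
\end{itemize}
Thus $\hat{\bold{u}}_n^T\bm{\Xi}_i\hat{\bold{u}}_n$ is convex, while $2\bm{\zeta}_i^T\hat{\bold{u}}_n+\varpi_i$ is affine. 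So $q_i$ is a sum of convex functions and is convex.

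The step we expect to be the main obstacle is confirming that the expansion in (\ref{E-60}) and the grouping into $\bm{\Xi}_i$, $\bm{\zeta}_i$, $\varpi_i$ do not hide any non-PSD cross term. The concern is the term $4\overline{\bold{u}}_i^T\bm{\Lambda}_i^c\overline{\bold{N}}_i\bm{\Lambda}_i^c\overline{\bold{u}}_i$ and the mixed linear parts produced by substituting $\overline{\bold{u}}_i=\bold{S}_i\hat{\bold{u}}_n+\overline{\bold{c}}_i$. We will handle this by observing that before expansion each such term is a PSD quadratic form in the affine argument $\overline{\bold{u}}_i$ (respectively $\tilde{\bold{u}}_i$). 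Affine substitution preserves convexity, and the resulting quadratic part is exactly the congruence $\mathbf{S}_i^T(\cdot)\mathbf{S}_i$ checked above, so no cross term can break convexity.
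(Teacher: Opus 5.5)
Your proof is correct and follows essentially the same route as the paper: show $\bm{\Upsilon}\succeq 0$, show $\bm{\Xi}_i\succeq 0$, and show that the quartic term $((\mathbf S_i\hat{\mathbf u}_n+\overline{\mathbf c}_i)^T\bm\Lambda_i^c(\mathbf S_i\hat{\mathbf u}_n+\overline{\mathbf c}_i))^2$ is convex. The differences are cosmetic. You get convexity of this square of a nonnegative convex quadratic from the monotone composition rule, whereas the paper computes the Hessian $2\nabla q_i\nabla q_i^T+2q_i\nabla^2 q_i\succeq 0$; you also spell out the second-order-cone constraints and the term-by-term PSD check of $\bm{\Xi}_i$, which the paper leaves implicit.
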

\begin{IEEEproof}
	Please see Appendix D.
\end{IEEEproof}

Due to the convexity of (\ref{P5}), one can efficiently solve it by using existing numerical tools, e.g., CVX\cite{b40}. After obtaining the optimal control input $\bold{u}_n^*$, a feasible transmit beamforming vector $\mathbf{w}_{n+1}^*$ can then be computed according to (\ref{E-49}). The proposed algorithm for solving problem (\ref{P1}) at each time slot is summarized in Algorithm \ref{ISCC}.
\begin{algorithm}[t]
	\caption{Proposed Algorithm for Solving Problem (\ref{P1})}
	\label{ISCC}
	\begin{algorithmic}[1]
		\STATE \textbf{Input:} Horizon length $N_0$; time slot index $n$; sensing measurement $\bold m_n$ and prior target prediction $\{\check{\bold s}^t[n],\check{\bold M}_n\}$ for $n\ge 2$; initial target estimate $\{\hat{\bold s}^t_{init},\hat{\bold M}_{init}\}$ for $n=1$.
		\STATE \textbf{Output:} Updated prediction $\{\check{\bold s}^t[n+1],\check{\bold M}_{n+1}\}$, control input $\bold u_n^{*}$, and feasible beamforming vector $\bold w_{n+1}^{*}$.
			\IF{$n=1$}
			\STATE Set $\{\hat{\bold s}^t[n],\hat{\bold M}_n\}\leftarrow\{\hat{\bold s}^t_{init},\hat{\bold M}_{init}\}$.
			\ELSE
			\STATE Obtain the posterior estimate $\{\hat{\bold s}^{t}[n],\hat{\bold M}_n\}$ via (\ref{EKF-5}) and (\ref{EKF-6}) with the prior $\{\check{\bold s}^{t}[n],\check{\bold M}_n\}$ and measurement $\bold m_n$.
			\ENDIF
		
			\STATE Predict the target states $\{\check{\bold s}^{t}[n+i]\}_{i\in\mathcal I}$ via (\ref{EKF-1}).
			\STATE Update the one-step prediction covariance $\check{\bold M}_{n+1}$ via (\ref{EKF-3}).
			\STATE Solve problem (\ref{P5}) to obtain the optimal control input $\bold u_n^*$.
			\STATE Compute a feasible transmit beamforming vector $\bold w_{n+1}^\star$ according to (\ref{E-49}).
		\end{algorithmic}
	\end{algorithm}

\section{Numerical Results}\label{sec:V}
In this section, we present numerical results to validate the
effectiveness of the proposed ISCC framework. Unless otherwise specified, the initial target and GU positions are set as $\bold{p}^{t}[1]=(0,400)$ m and $\bold{p}^{u}=(300,50)$ m. The UAV flies at a fixed altitude of $H=50$ m and is equipped with a UPA of $M_x^{t/r}=M_y^{t/r}=4$ antennas per axis, with a transmit power budget of $P_T=30$ dBm. The carrier frequency and matched-filtering gain are $f_c=30$ GHz and $G=10^3$, and the communication and radar noise powers are $\sigma_c^2=\sigma_r^2=-80$ dBm. The mission horizon is discretized into $N=300$ time slots with duration $\delta_t=0.2$ s, and the prediction horizon length is $N_0=5$. The weighting matrices are $\bold{Q}=\bold{I}_4$ and $\bold{R}=\bold{I}_2$, and the sensing constants are $a_1=20$ and $a_2=100$. The target motion noise variances are $\sigma_x^2=\sigma_y^2=4\times10^{-4}$ and $\sigma_{v_x}^2=\sigma_{v_y}^2=0.01$, while the echo SNR and communication rate thresholds are set to $\text{SNR}_{th}^{echo}=5$ dB and $R_{th}=2.5$ bps/Hz. We compare the proposed ISCC scheme with the following benchmark methods:
\begin{itemize}
	\item \textbf{LQG-based scheme}~\cite{b11}: Following the LQR formulation introduced in Section~II, we compute the feedback gain $\bold K_n$ offline from \eqref{D1}-\eqref{D2}, and apply the control law $\bold u_n=-\bold K_n\hat{\bold e}_n$. Since this unconstrained input may violate the UAV's physical limits, we enforce feasibility via a two-step saturation/projection procedure: (i) clip the control input to satisfy $\|\bold u_n\|\le a_{\max}$; and (ii) if $\|\bold v_n+T\bold u_n\|>V_{\max}$, project the predicted velocity $\bold v_n+T\bold u_n$ onto the $\ell_2$ speed ball of radius $V_{\max}$ and then obtain the corresponding $\bold u_n$, ensuring $\|\bold v_n+T\bold u_n\|\le V_{\max}$. After determining the feasible input $\bold u_n$, the beamforming vector $\bold{w}_{n+1}$ is obtained according to (\ref{E-49}).
	
	\item \textbf{Non-casual MPC scheme}: At each time slot, this benchmark assumes that the target states over the next $N_0$ steps are perfectly known. As a result, the sensing constraint \eqref{P1-b} is removed and problem \eqref{P1} reduces to a deterministic convex MPC, which can be solved directly. Since it bypasses EKF-based state estimation by assuming future target states, it is used only as a clairvoyant trajectory reference, and its estimation performance is not reported.
\end{itemize}

	\begin{figure*}[t]
	\centering
	\begin{minipage}{0.315\textwidth}
		\centering
		\includegraphics[width=\linewidth]{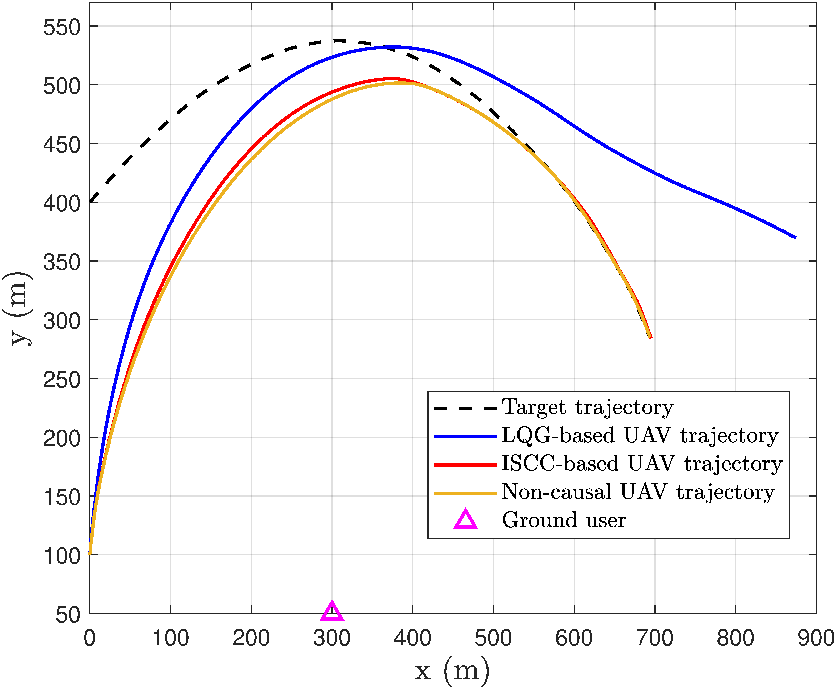}
		\subcaption{Case 1: $\bold{p}^{U}[1]=(0,100)$ m}
		\label{T100}
	\end{minipage}
	\hfill
	\begin{minipage}{0.315\textwidth}
		\centering
		\includegraphics[width=\linewidth]{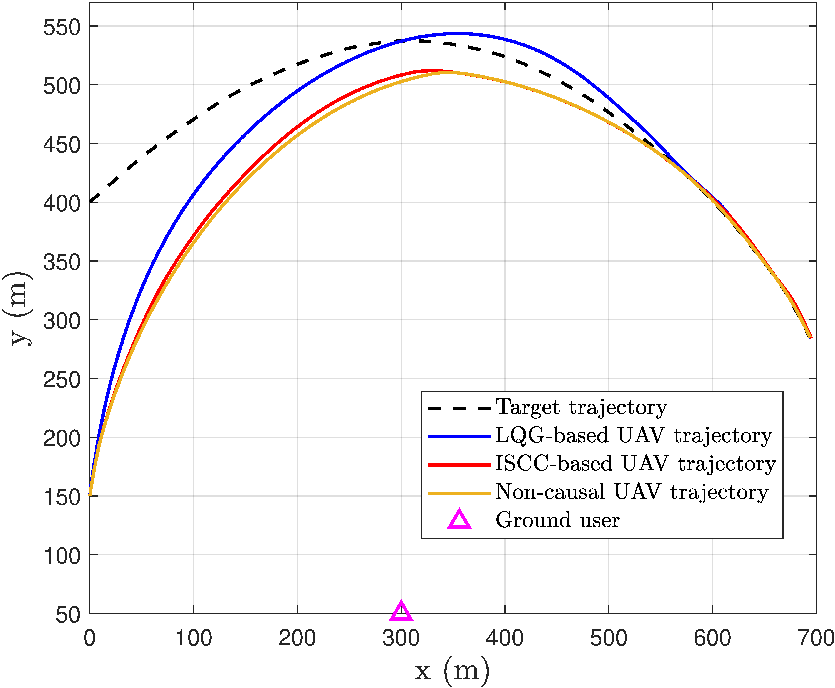}
		\subcaption{Case 2: $\bold{p}^{U}[1]=(0,150)$ m}
		\label{T150}
	\end{minipage}
	\hfill
	\begin{minipage}{0.315\textwidth}
		\centering
		\includegraphics[width=\linewidth]{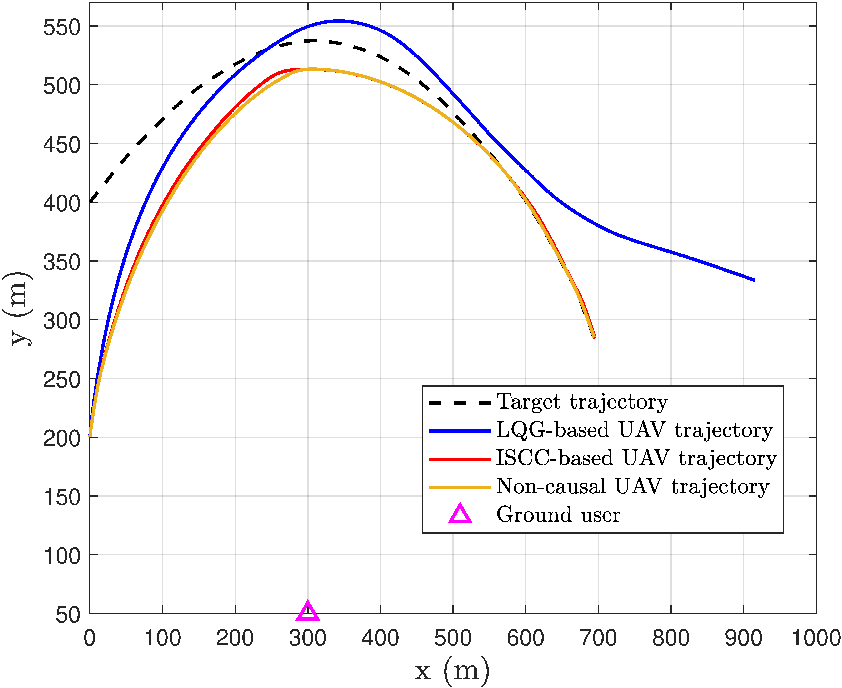}
		\subcaption{Case 3: $\bold{p}^{U}[1]=(0,200)$ m}
		\label{T200}
	\end{minipage}
	\caption{\raggedright UAV trajectories obtained by different schemes under an accurate initial target state estimate.}
	\label{fig4}
	
	\vspace{0pt}
\end{figure*}

\begin{figure*}[t]
	\centering
	\begin{minipage}{0.315\textwidth}
		\centering
		\includegraphics[width=\linewidth]{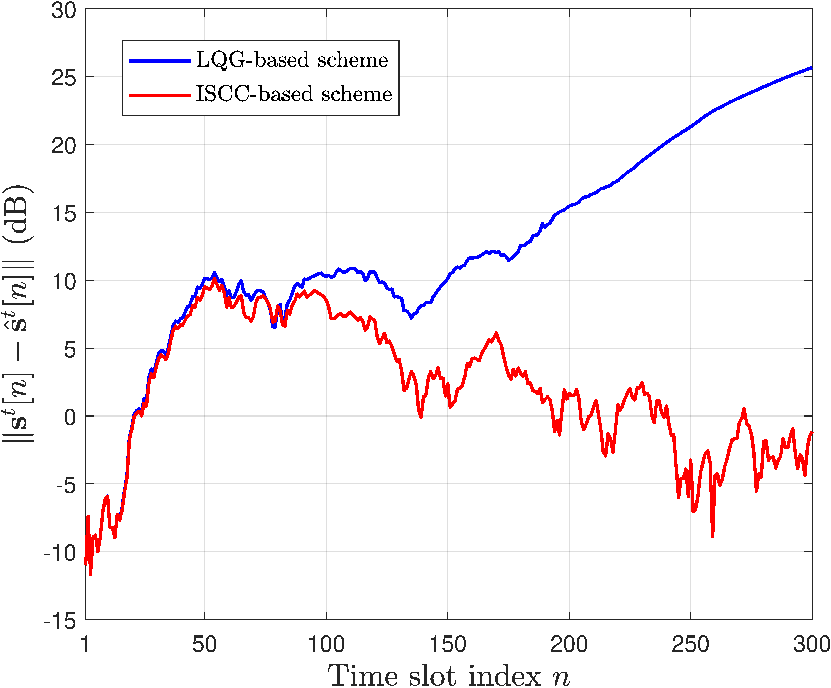}
		\subcaption{Case 1: $\bold{p}^{U}[1]=(0,100)$ m}
		\label{E100}
	\end{minipage}
	\hfill
	\begin{minipage}{0.315\textwidth}
		\centering
		\includegraphics[width=\linewidth]{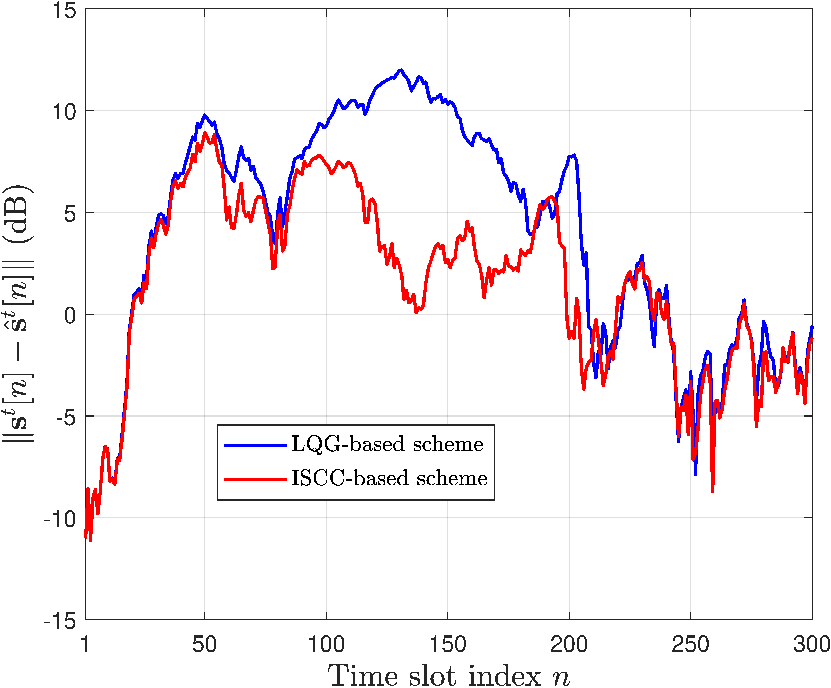}
		\subcaption{Case 2: $\bold{p}^{U}[1]=(0,150)$ m}
		\label{E150}
	\end{minipage}
	\hfill
	\begin{minipage}{0.315\textwidth}
		\centering
		\includegraphics[width=\linewidth]{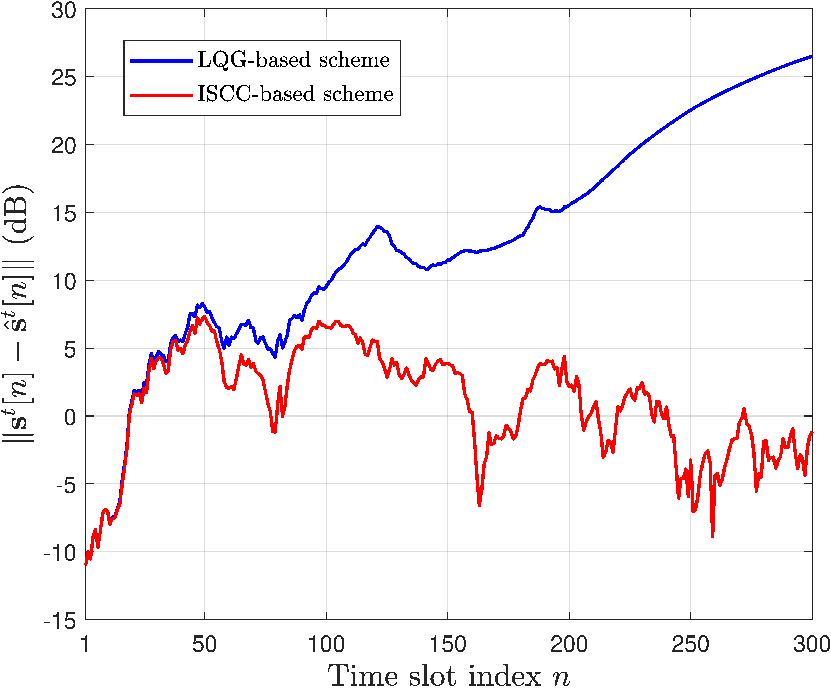}
		\subcaption{Case 3: $\bold{p}^{U}[1]=(0,200)$ m}
		\label{E200}
	\end{minipage}
	\caption{\raggedright Target estimation errors obtained by different schemes under an accurate initial target state estimate.}
	\label{fig5}
	
	\vspace{-8pt}
\end{figure*}

First, Fig.~\ref{fig4} illustrates the UAV trajectories obtained by the proposed ISCC scheme, the LQG-based benchmark, and the non-causal benchmark under an accurate initial estimate of the target state, characterized by $\hat{\mathbf{M}}_{1}=\mathbf{Q}_s$, for three different UAV initial positions, i.e., Case~1 with $\bold p^{U}[1]=(0,100)$~m, Case~2 with $\bold p^{U}[1]=(0,150)$~m, and Case~3 with $\bold p^{U}[1]=(0,200)$~m.
As shown, the proposed ISCC scheme closely follows the target trajectory in all three cases while satisfying the per-slot communication requirement. In contrast, the LQG-based benchmark exhibits a pronounced dependence on the initial position, i.e., it tracks well in Case~2 but gradually deviates in Cases~1 and~3, indicating a less stable behavior. This sensitivity stems from the coupled effect of estimation-based feedback and constraint enforcement.
Specifically, the controller first generates a candidate input $\bold u_n=-\bold K_n\hat{\bold e}_n$ and then applies saturation/projection to satisfy the physical limits. While this operation guarantees feasibility, it introduces a nonlinear mismatch relative to the nominal LQG feedback, and the extent of such mismatch depends on the evolving geometry and the control effort required from a given initial position. As a result, the LQG-based design becomes more sensitive to the operating condition, once the UAV enters a less favorable regime, the constrained control action may fail to adequately compensate for the tracking error, which can in turn degrade the sensing geometry and weaken subsequent EKF updates, leading to an error-accumulation effect and trajectory divergence over the long horizon. Moreover, it is observed that a visible gap between the ISCC and the non-causal trajectories appears in the early stage, while the two trajectories become nearly overlapped in the second half of the mission. This behavior is expected due to both causality and constraint differences. At the beginning of the mission, the UAV is far from the target and the EKF-based estimate/prediction is less reliable due to limited informative measurements; meanwhile, the ISCC scheme must explicitly satisfy the sensing requirement, which can significantly restrict the feasible maneuvering region when the sensing link is weak. Consequently, the ISCC trajectory tends to be more conservative than the non-causal benchmark. As the mission proceeds, the UAV approaches the target and accumulates more reliable measurements, which improves the EKF accuracy and relaxes the sensing constraint; therefore, the prediction mismatch diminishes and the ISCC trajectory progressively approaches the non-causal one, leading to an almost overlapping behavior in the later stage.

The above trajectory behaviors are further corroborated in Fig.~\ref{fig5}, which plots the target estimation error for the three cases. Note that the non-causal MPC benchmark is excluded since it is oracle-aided and does not involve online state estimation. For Cases~1 and~3, the LQG-based benchmark suffers from a steadily increasing estimation error, suggesting an error-accumulation effect, i.e.,  once the UAV starts to drift away from the target, the sensing geometry degrades, which weakens the EKF update and further amplifies the control deviation.
For Case~2, the LQG-based benchmark maintains a relatively small estimation error, which explains why it can achieve successful tracking in Fig.~\ref{fig4} (b). By contrast, the proposed ISCC scheme achieves consistently lower estimation errors across all cases, since it explicitly enforces sensing reliability (via the echo-SNR constraint) when coordinating beamforming and control decisions, thereby stabilizing the measurement quality for EKF updates.

	\begin{figure}[t]
	\centering
	\begin{minipage}{0.241\textwidth}
		\centering
		\includegraphics[width=\linewidth]{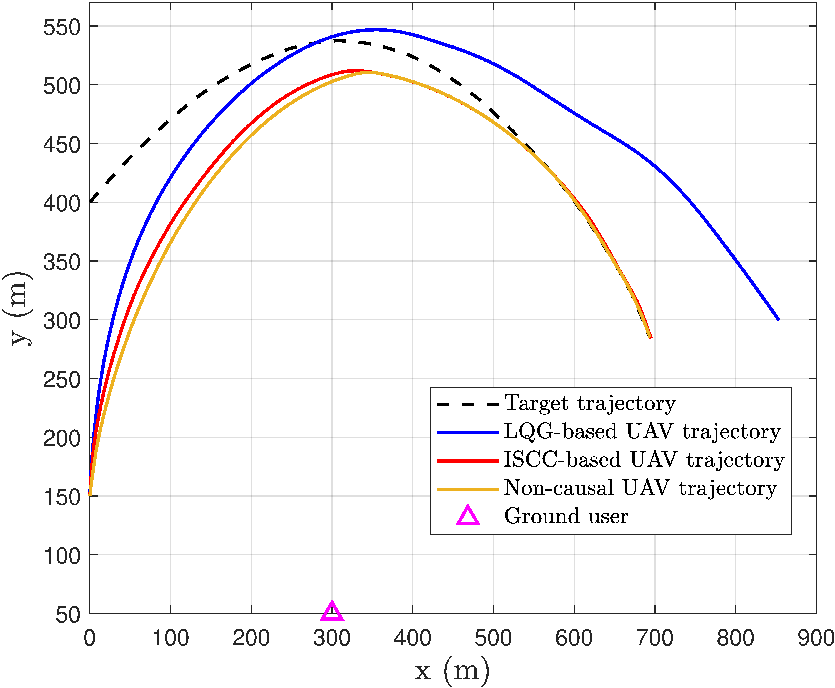}
		\subcaption{UAV trajectories}
		\label{T150_I}
	\end{minipage}
	\hfill
	\begin{minipage}{0.241\textwidth}
		\centering
		\includegraphics[width=\linewidth]{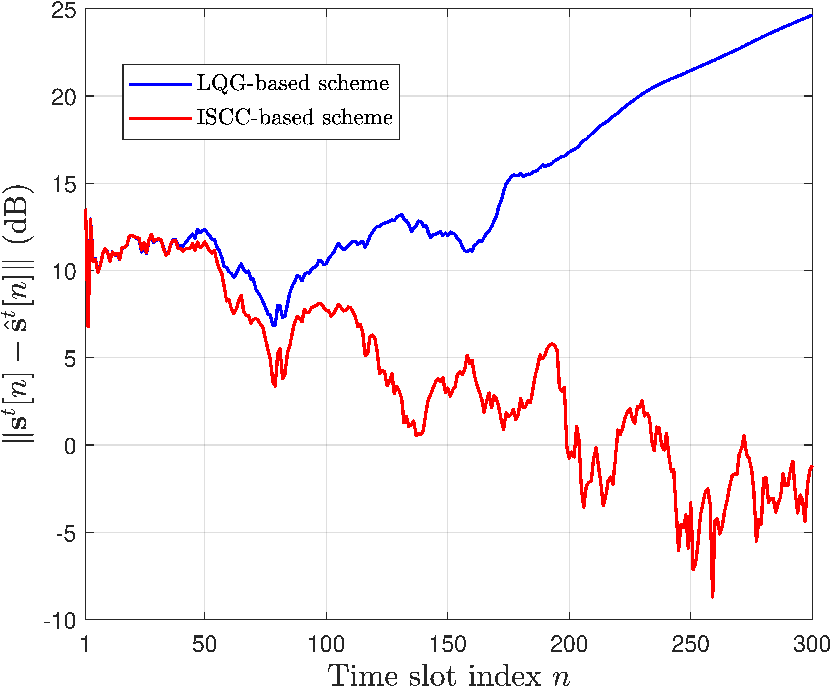}
		\subcaption{Target estimation errors}
		\label{R150_I}
	\end{minipage}
	\caption{\raggedright UAV trajectories and target estimation errors in Case 2 under an inaccurate initial target state estimate.}
	\label{fig7}
	
	\vspace{0pt}
\end{figure}

Fig.~\ref{fig7} further evaluates the performance of the considered schemes in Case 2 with an inaccurate initial estimate of the target state, characterized by $\hat{\mathbf{M}}_{1}=\bold{Q}_s+\text{diag}([400,400,4,4])$. As can be observed, compared with the accurate-initialization case in Fig.~\ref{fig4} (b), the LQG-based benchmark now exhibits a clear long-term deviation, accompanied by a rapidly growing target state estimation error. This behavior suggests that, under poor initialization, the LQG-based scheme may steer the UAV into trajectories that do not sustain sufficient sensing quality, causing the EKF error to persist and accumulate, which ultimately deteriorates tracking performance. In contrast, the proposed ISCC scheme remains stable, it drives the UAV along a trajectory that sustains effective sensing, thereby keeping the EKF error bounded throughout the horizon. These results highlight the superior performance of the proposed scheme even with inaccurate initial target state information.

Next, to further corroborate the above representative-instance observations under random measurement noise $\bold{z}_n$ and target motion process noise $\bold{n}_s$, we report Monte Carlo results (see Figs. \ref{fig8}--\ref{fig11}) in Case~2 with $M=100$ independent trials to quantify the long-horizon tracking and communication performance. Specifically, Fig.~\ref{fig8} evaluates the long-horizon tracking quality by plotting the root mean square (RMS) target state tracking error over time, calculated as $\sqrt{\frac{1}{M}\sum_{m=1}^M\Vert\bold{e}_n^{(m)}\Vert^2}$, where $\bold{e}_n^{(m)}$ denotes the target state tracking error at the $n$-th time slot for the $m$-th trial. As can be observed, for the LQG-based benchmark, the tracking error may decrease in the early stage but fails to remain small in the later stage under both accurate and inaccurate initializations, implying that its long-term closed-loop tracking behavior is not reliably maintained across random realizations. Moreover, inaccurate initialization leads to more pronounced late-stage degradation, since a larger initial mismatch makes it more likely for some trials to evolve into unfavorable closed-loop realizations with noticeable drift, which elevates the error curve. By contrast, the proposed ISCC scheme drives the tracking error to a consistently low level toward the end of the horizon under both initialization settings, since it performs receding-horizon control while explicitly accounting for sensing feasibility, thereby stabilizing the constrained closed-loop evolution. In addition, its performance closely approaches the oracle-aided non-causal MPC reference enabled by future target-state information.

\begin{figure}[t]
	\centering
	\includegraphics[width=0.9\linewidth]{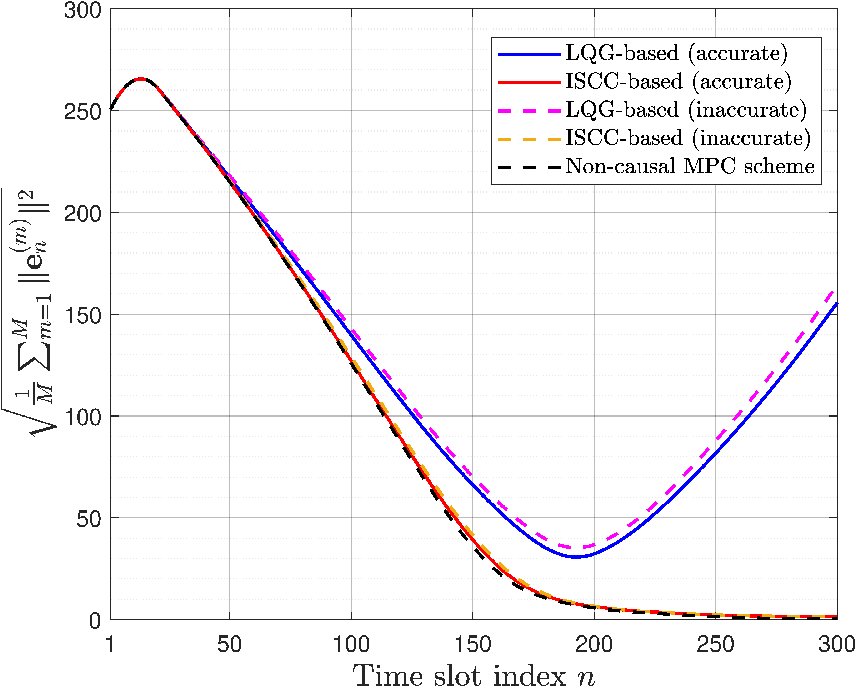}
	\caption{RMS target state tracking error in Case 2.}
	\label{fig8}
	\vspace{0pt}
\end{figure}

\begin{figure}[t]
	\centering
	\includegraphics[width=0.87\linewidth]{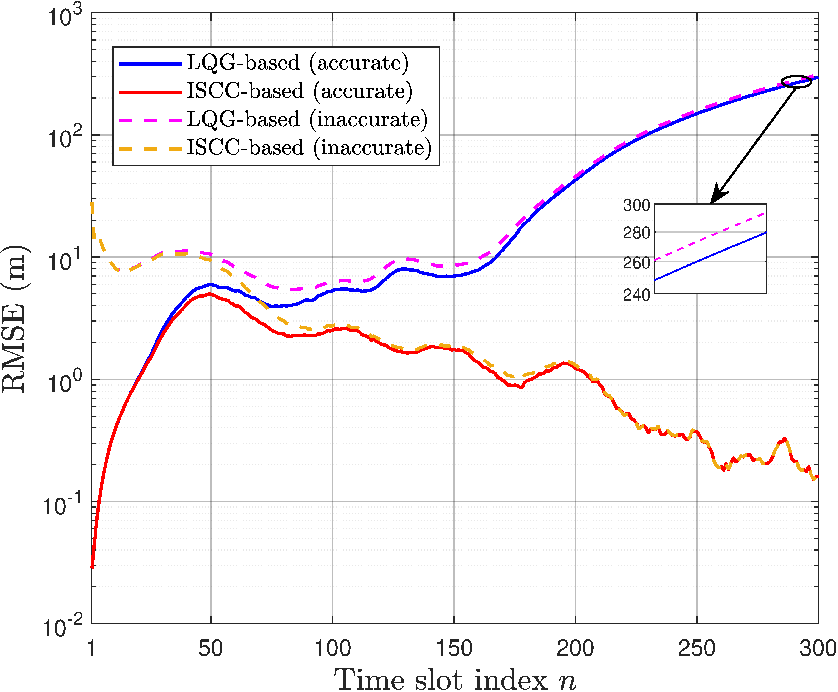}
	\caption{RMSE of target position estimation in Case 2.}
	\label{fig9}
	\vspace{0pt}
\end{figure}

\begin{figure}[t]
	\centering
	\includegraphics[width=0.87\linewidth]{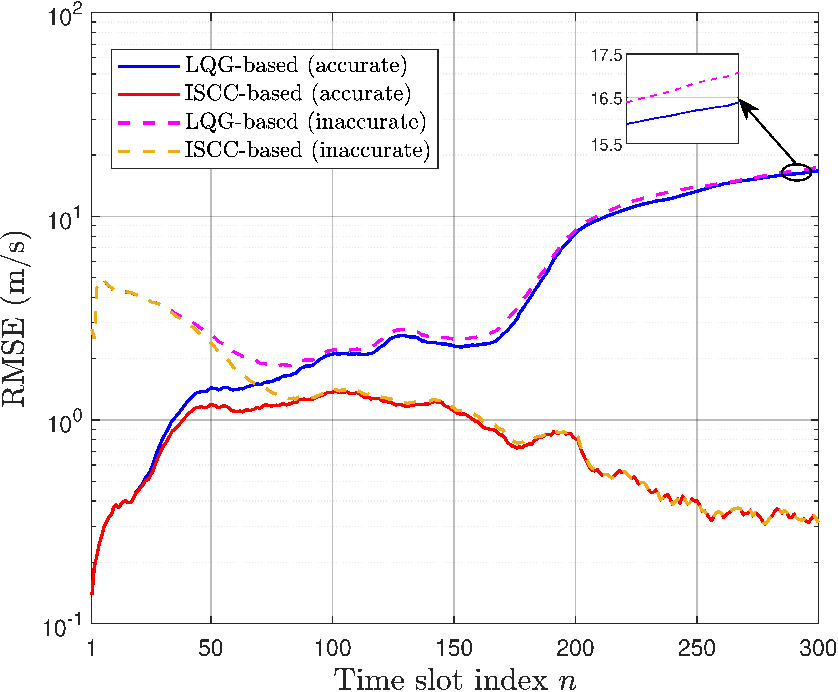}
	\caption{RMSE of target velocity estimation in Case 2.}
	\label{fig10}
	\vspace{0pt}
\end{figure}

\begin{figure}[t]
	\centering
	\includegraphics[width=0.9\linewidth]{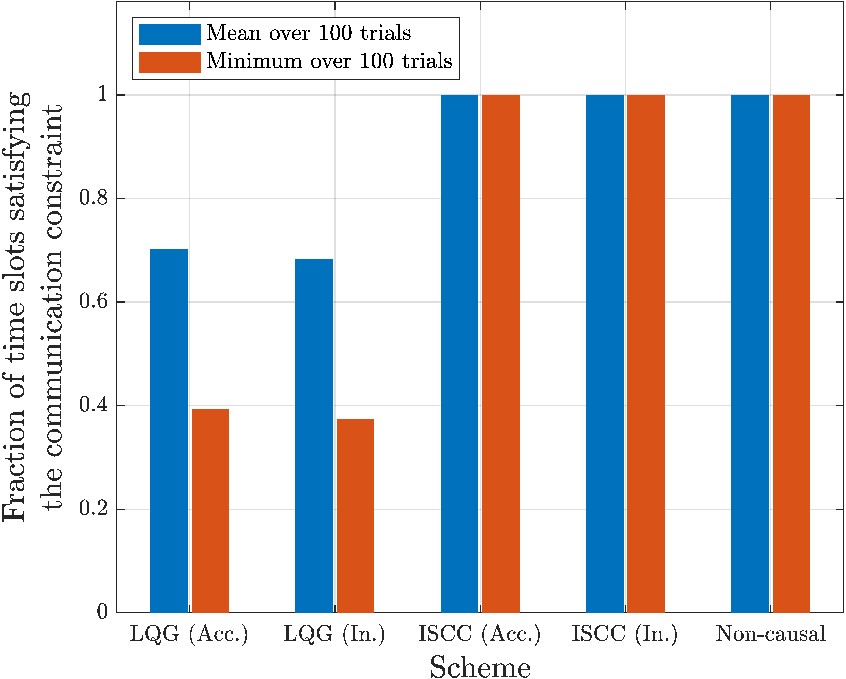}
	\caption{Fraction of time slots satisfying the communication rate constraint (\ref{P1-a}) in Case 2.}
	\label{fig11}
	\vspace{0pt}
\end{figure}

In Figs.~\ref{fig9} and~\ref{fig10}, we demonstrate the radar sensing performance in terms of root mean squared error (RMSE) for target position and velocity tracking, respectively. It is observed that the proposed ISCC scheme consistently achieves much lower RMSE than the LQG-based benchmark for both position and velocity, and the performance gap becomes increasingly pronounced as the mission proceeds. In particular, the LQG-based benchmark exhibits a clear RMSE growth in the later stage under both accurate and inaccurate initializations. This behavior is consistent with the observations in Figs.~\ref{fig4}--\ref{fig7}, i.e., while some trials may maintain acceptable estimation accuracy, once a subset of trials starts to lose track, the EKF errors can escalate rapidly and dominate the Monte Carlo statistics, thereby driving up the RMSE curves. With inaccurate initialization, such tracking-failure trials occur more frequently, leading to an additional RMSE increase compared with the accurate-initialization case. In contrast, the proposed ISCC scheme keeps the RMSE bounded and generally improves it over time, since it explicitly enforces sensing reliability when coordinating beamforming and control decisions, which sustains informative measurements for EKF updates and suppresses the occurrence of tracking-failure realizations.

\vspace{1pt}
Finally, Fig.~\ref{fig11} evaluates the communication reliability in terms of the fraction of time slots satisfying the per-slot rate constraint~(\ref{P1-a}). For each trial, this fraction is calculated over all $N$ time slots. The mean fraction corresponds to the average across $M$ trials, while the minimum fraction denotes the lowest fraction observed among these trials. As shown, both the proposed ISCC scheme and the non-causal benchmark achieve a satisfaction fraction of exactly $1$ for both the mean and the minimum fraction, indicating that the rate constraint is satisfied at every time slot in every trial. By contrast, the LQG-based benchmark attains a noticeably smaller mean fraction and a substantially smaller minimum fraction, which implies that rate violations can occur in a non-negligible portion of time slots under unfavorable realizations. Moreover, the satisfaction fraction under inaccurate initialization is lower than that under accurate initialization, since the larger initial mismatch increases the likelihood of adverse realizations where tracking/estimation quality deteriorates and the resulting closed-loop evolution steers the UAV to less favorable communication geometries, thereby increasing the frequency of rate-constraint violations.

\section{Conclusion}\label{sec:VI}
This paper proposed an ISCC framework for UAV-assisted mobile target tracking, where the UAV must simultaneously ensure tracking accuracy, reliable communication, and physically feasible motion. By modeling the tracking process as a discrete-time control system, we formulated a stochastic MPC problem for joint control and beamforming design. To make the MPC problem tractable, we first employed an EKF to estimate the target state, and then derived the closed-form optimal beamforming solution for any given control input, which enables an equivalent control-oriented reformulation. An asymptotically tight lower-bound relaxation technique was further developed to obtain a convex problem that can be efficiently solved. Numerical results demonstrated that the proposed ISCC approach achieves tracking performance comparable to a non-causal benchmark while maintaining stable communication service, and it consistently outperforms a conventional LQG-based scheme under practical constraints.

\begin{appendices}
	\section{}
To solve problem (\ref{C-check}), we decompose it into $N_0$ independent subproblems, each corresponding to a single time slot. Specifically, the $i$-th subproblem is formulated as 
\vspace{-5pt}

\begin{subequations}\label{C-sub}\small
	\begin{alignat}{2}
		\underset{\bold{w}_{n+i}}{\text{max}} \ & \log_2\left(1 + \frac{\beta_0 |\bold{a}_{c,i}^H \bold{w}_{n+i}|^2}{\sigma_c^2d_{c,i}^2 } \right) - R_{th} \notag \\
		\text{s.t.} \ & |\check{\bold{a}}_{n+i}^H \bold{w}_{n+i}|^2 \ge \mathbb{E}_{\overline{\bold{n}}_i}\{d_{n+i}^4\} \Gamma_{th}, \label{C-sub-a}\\
		& \Vert \bold{w}_{n+i} \Vert^2 \le P_T. \label{C-sub-b}
	\end{alignat}
\end{subequations}
Despite the non-convexity of this problem, its optimal beamforming vector admits a closed-form expression given by\cite{I12}
\begin{equation}\label{C-w-opt}\small
	\bold{w}^*_{n+i} =
	\begin{cases}
		\sqrt{P_T} \frac{\bold{a}_{c,i}}{\Vert \bold{a}_{c,i} \Vert}, &\text{if}\  P_T|\check{\bold{a}}_{n+i}^H \bold{a}_{c,i}|^2 \ge M_t\Gamma_{d,i},\\
		\epsilon_1 \bm{\chi}_1 + \epsilon_2 \bm{\chi}_2, & \text{otherwise},
	\end{cases}
\end{equation}
where $\Gamma_{d,i}\triangleq\mathbb{E}_{\overline{\bold{n}}_i}[d_{n+i}^4] \Gamma_{th}$, $\epsilon_1,\epsilon_2,\bm{\chi}_1$ and $\bm{\chi}_2$ are auxiliary terms expressed as
\begin{equation}\small
	\epsilon_1=\sqrt{\frac{\Gamma_{d,i}}{M_t}}\frac{\bm{\chi}^H_1\bold{a}_{c,i}}{|\bm{\chi}^H_1\bold{a}_{c,i}|},\epsilon_2=\sqrt{P_T-\frac{\Gamma_{d,i}}{M_t}}\frac{\bm{\chi}^H_2\bold{a}_{c,i}}{|\bm{\chi}^H_2\bold{a}_{c,i}|}.\notag
\end{equation}
\begin{equation}\small
	\bm{\chi}_1=\frac{\check{\bold{a}}_{n+i}}{\Vert\check{\bold{a}}_{n+i}\Vert},\bm{\chi}_2=\frac{\bold{a}_{c,i}-(\bm{\chi}^H_1\bold{a}_{c,i})\bm{\chi}_1}{\Vert\bold{a}_{c,i}-(\bm{\chi}^H_1\bold{a}_{c,i})\bm{\chi}_1\Vert},\notag
\end{equation}

By substituting $\bold{w}^*_{n+i}$ from (\ref{C-w-opt}) back into problem (\ref{C-sub}), the corresponding optimal achievable value of the beamforming-dependent term $G_{n+i}^*\triangleq\vert\bold{a}_{c,i}\bold{w}_{n+i}^*\vert^2$ is
\begin{equation}\label{C-R-opt}\small
	G^*_{n+i} =
	\begin{cases}
		\gamma, \ \ \text{if} \ \gamma\cos^2\theta_{i} \geq \Gamma_{d,i}, \\
		\left(\sqrt{\Gamma_{d,i}}\cos\theta_{i}+\sqrt{\gamma-\Gamma_{d,i}}\sin\theta_{i}\right)^2,\text{otherwise}.\notag
	\end{cases}
\end{equation}
Therefore, the optimal objective value of problem (\ref{C-check}) is $R^* = \min_{i\in\mathcal{I}} \left\{ R_{n+i}^* - R_{th} \right\}$, where $R_{n+i}^*\triangleq\log_2\left(1 + \frac{\beta_0 G_{n+i}^*}{\sigma_c^2d_{c,i}^2}\right)$ denotes the maximum achievable communication rate at the $(n+i)$-th time slot under the given control inputs $\{\bold{u}_{n-1+i}\}_{i\in\mathcal{I}}$.

\section{Proof of Lemma \ref{lemma-1}}
	To establish (\ref{E-55}), it suffices to prove that, for each $i\in\mathcal I$,
	\begin{equation}\label{E-77}\small
		\Gamma^*_{n+i}\ge \Gamma_{d,i} \Longleftrightarrow R_{n+i}^*\ge R_{th}.\notag
	\end{equation}
	To this end, we define the following feasible sets:
	\begin{subequations}\small
	\begin{align}
	\mathcal S_i \triangleq \{\mathbf w\in\mathbb C^{M_t}:\ \|\mathbf w\|^2\le P_T,\ R_{n+i}(\mathbf w)\ge R_{th}\},\\
	\mathcal T_i \triangleq \{\mathbf w\in\mathbb C^{M_t}:\ \|\mathbf w\|^2\le P_T,\ |\check{\mathbf a}_{n+i}^H\mathbf w|^2\ge \Gamma_{d,i}\},
	\end{align}
\end{subequations}
where $\mathcal S_i$ and $\mathcal T_i$ correspond to the feasibility regions of problems (\ref{S-sub}) and (\ref{C-sub}), respectively. With these definitions, we can write $\Gamma^*_{n+i}=\max_{\mathbf w\in\mathcal S_i} |\check{\mathbf a}_{n+i}^H\mathbf w|^2$ and $R^*_{n+i}=\max_{\mathbf w\in\mathcal T_i} R_{n+i}(\mathbf w)$. 

\textit{1) Proof of $\Gamma^*_{n+i}\ge \Gamma_{d,i}\Rightarrow R^*_{n+i}\ge R_{th}$.} Let $\mathbf{w}_s\in\mathcal S_i$ denote an optimal solution of problem (\ref{S-sub}). By definition, we have \(\Gamma^*_{n+i}=|\check{\mathbf a}_{n+i}^H\mathbf{w}_s|^2\). Under the assumption $\Gamma^*_{n+i}\ge \Gamma_{d,i}$, it follows that
$|\check{\mathbf a}_{n+i}^H\mathbf w_s|^2\ge \Gamma_{d,i}$, i.e., $\mathbf w_s\in\mathcal T_i$. Thus, \(\mathbf{w}_s\in\mathcal S_i\cap\mathcal T_i\) and $R^*_{n+i}=\max_{\mathbf w\in\mathcal T_i} R_{n+i}(\mathbf w)\ge R_{n+i}(\mathbf w_s)\ge R_{th}$, where the last inequality holds since $\mathbf w_s\in\mathcal S_i$.

\textit{2) Proof of $R^*_{n+i}\ge R_{th}\Rightarrow \Gamma^*_{n+i}\ge \Gamma_{d,i}$.} Similarly, let $\mathbf{w}_{\tau}\in\mathcal T_i$ denote an optimal solution of problem (\ref{C-sub}). Then, we have \(R^*_{n+i}=R_{n+i}(\mathbf w_{\tau})\). Under the assumption $R^*_{n+i}\ge R_{th}$, it follows that
$R_{n+i}(\mathbf w_{\tau})\ge R_{th}$, i.e., $\mathbf w_{\tau}\in\mathcal S_i$. Hence, \(\mathbf{w}_{\tau}\in\mathcal S_i\cap\mathcal T_i\) and $\Gamma^*_{n+i}=\max_{\mathbf w\in\mathcal S_i} |\check{\mathbf a}_{n+i}^H\mathbf w|^2\ge |\check{\mathbf a}_{n+i}^H\mathbf w_{\tau}|^2\ge \Gamma_{d,i}$, where the last inequality holds since $\mathbf w_{\tau}\in\mathcal T_i$.

Combining the two implications completes the proof.

\section{Proof of Lemma \ref{lemma-2}}
To establish the lower bound in (\ref{lower}), we examine the following two cases based on the value of $\delta_i$.

\textbf{1) Case I: $\delta_i = 0$}.

In this case, $\check{\bold{p}}^{t}[n+i]=\bold{p}^{u}$ holds, which directly yields $\check{\bold{a}}_{n+i}=\bold{a}_{c,i}$ and $\theta_i=0$.  
Substituting $\theta_i=0$ into (\ref{Optimal}) gives $\Gamma^*_{n+i}=\gamma$.  
Therefore, (\ref{lower}) reduces to $\Gamma^*_{n+i}=\Gamma^l_{n+i}=\gamma$, showing that the lower bound is exact (tight) in this case.

\textbf{2) Case II: $\delta_i = 1$}.

We first show that $\Gamma_{n+i}^l$ is a valid lower bound. As can be observed, it is sufficient to prove that
\begin{equation}\label{prove}\small
	\left(\sqrt{\eta d^2_{c,i}}\cos\theta_{i}+\sqrt{\gamma-\eta d^2_{c,i}}\sin\theta_{i}\right)^2 \ge \gamma-\eta d^2_{c,i},
\end{equation}
holds under the condition $\gamma\cos^2\theta_{i} < \eta d^2_{c,i}$. Let $\phi_i = \arccos\sqrt{\eta d_{c,i}^2/\gamma} \in \left[0,\frac{\pi}{2}\right]$, and by substituting it into (\ref{prove}), we obtain
\begin{subequations}\small
	\begin{alignat}{2}
		&\left(\sqrt{\eta d^2_{c,i}}\cos\theta_{i}+\sqrt{\gamma-\eta d^2_{c,i}}\sin\theta_{i}\right)^2-(\gamma-\eta d^2_{c,i})\notag\\
		&\qquad=\gamma\left(\cos\phi_i\cos\theta_{i}+\sin\phi_i\sin\theta_{i}\right)^2-\gamma\sin^2\phi_i\notag\\
		&\qquad= \gamma\cos^2(\theta_{i}-\phi_i) - \gamma\cos^2\left(\tfrac{\pi}{2}-\phi_i\right) \triangleq d_{if}. \notag
	\end{alignat}
\end{subequations}
Since $\gamma\cos^2\theta_i \le \eta d_{c,i}^2 = \gamma\cos^2\phi_i$, we have $0 \le \theta_i - \phi_i \le \frac{\pi}{2}-\phi_i$, which implies $\cos^2(\theta_i - \phi_i) \ge \cos^2(\tfrac{\pi}{2}-\phi_i)$ and hence $d_{if}\ge 0$. This confirms that $\Gamma^*_{n+i} \ge \Gamma_{n+i}^l$ for the case of $\delta_i=1$.

Next, we establish the asymptotic tightness of the bound as $M_t\to\infty$. For notational convenience, let $\Phi_{u,i}\triangleq\Phi(\bold{p}^{U}[n+i],\bold{p}^{u})$, $\Omega_{u,i}\triangleq\Omega(\bold{p}^{U}[n+i],\bold{p}^{u})$, $\check{\Phi}_{t,i}\triangleq\Phi(\bold{p}^{U}[n+i],\check{\bold{p}}^{t}[n+i])$ and $\check{\Omega}_{t,i}\triangleq\Omega(\bold{p}^{U}[n+i],\check{\bold{p}}^{t}[n+i])$. When $\Phi_{u,i}\neq \check{\Phi}_{t,i}$ and $\Omega_{u,i}\neq \check{\Omega}_{t,i}$, the inner product $\lvert\check{\bold{a}}_{n+i}^H\bold{a}_{c,i}\rvert$ satisfies
\begin{subequations}\small
	\begin{alignat}{2}
		\vert\check{\bold{a}}_{n+i}^H\bold{a}_{c,i}\vert
		&= \left| 
		\sum_{m_x=0}^{M_x^t-1} e^{j\pi m_x (\Phi_{u,i}-\check\Phi_{t,i})}
		\sum_{m_y=0}^{M_y^t-1} e^{j\pi m_y (\Omega_{u,i}-\check\Omega_{t,i})}
		\right| \notag\\
		&= 
		\left|
		\left(
		\frac{e^{j\pi M_x^t\Delta_{\Phi,i}} - 1}
		{e^{j\pi\Delta_{\Phi,i}} - 1}
		\right)	\left(
		\frac{e^{j\pi M_y^t\Delta_{\Omega,i}} - 1}
		{e^{j\pi\Delta_{\Omega,i}} - 1}
		\right)
		\right| \notag\\
		&=
		\Bigg|
		\frac{\sin\!\left(\pi M_x^t \Delta_{\Phi,i}/2\right)}
		{\sin\!\left(\pi \Delta_{\Phi,i}/2\right)}\Bigg|
		\Bigg|
		\frac{\sin\!\left(\pi M_y^t \Delta_{\Omega,i}/2\right)}
		{\sin\!\left(\pi \Delta_{\Omega,i}/2\right)}\Bigg|,\notag
	\end{alignat}
\end{subequations}
where $\Delta_{\Phi,i}\triangleq \Phi_{u,i}-\check{\Phi}_{t,i}$ and $\Delta_{\Omega,i}\triangleq \Omega_{u,i}-\check{\Omega}_{t,i}$. Based on the definition $\theta_{i}\triangleq\arccos\frac{\vert\check{\bold{a}}_{n+i}^H\bold{a}_{c,i}\vert}{\Vert\check{\bold{a}}_{n+i}\Vert\Vert\bold{a}_{c,i}\Vert}\in \left[0,\frac{\pi}{2}\right]$, we have
$\cos\theta_i = \Bigg|
	\frac{\sin\!\left(\pi M_x^t \Delta_{\Phi,i}/2\right)}
	{M_x^t\sin\!\left(\pi \Delta_{\Phi,i}/2\right)}\Bigg|
	\Bigg|
	\frac{\sin\!\left(\pi M_y^t \Delta_{\Omega,i}/2\right)}
	{M_y^t\sin\!\left(\pi \Delta_{\Omega,i}/2\right)}\Bigg|$.
When either $M_x^t \to \infty$ or $M_y^t \to \infty$, we obtain $\cos\theta_i \to 0$, meaning $\theta_i \to \frac{\pi}{2}$ and consequently $d_{if}\to 0$.  
Similarly, in the partial-alignment cases $\Delta_{\Phi,i}=0$ or $\Delta_{\Omega,i}=0$, $\theta_i\to\frac{\pi}{2}$ still holds as the remaining array dimension grows. Since $\delta_i=1$ excludes the perfect-alignment condition $\Delta_{\Phi,i}=\Delta_{\Omega,i}=0$, the limit $\theta_i\to\frac{\pi}{2}$ is attainable as $M_t\to\infty$. Therefore, the lower bound $\Gamma^l_{n+i}$ becomes asymptotically tight as $M_t\to\infty$.

\section{Proof of Theorem \ref{theorem-1}}
To establish the convexity of problem (\ref{P5}), it suffices to show that $\bm{\Upsilon}\succeq 0$, $\bm{\Xi}_i\succeq 0$ for all $i\in\mathcal{I}$, and that the quartic function $
	h_i(\hat{\mathbf u}_n)
	=\Big((\mathbf S_i\hat{\mathbf u}_n+\overline{\mathbf c}_i)^T\bm\Lambda_i^c(\mathbf S_i\hat{\mathbf u}_n+\overline{\mathbf c}_i)\Big)^2$ is convex for any \(i\in\mathcal I\). 
First, it can be readily seen that  
\begin{equation}\small
	\bm{\Upsilon} \triangleq \sum_{i=1}^{N_0} \mathbf{S}_{i}^{T} \bm{\Lambda}_{i}^{q} \mathbf{S}_{i} + \mathbf{I}_{N_0} \otimes \mathbf{R} \succeq 0,\notag
\end{equation}  
since \(\bm{\Lambda}_{i}^q \succeq 0\), \(\overline{\mathbf N}_i \succeq 0\), and \(\mathbf{R} \succ 0\). Furthermore, as \(\eta\) and \(\Gamma_{th}\) are strictly positive to ensure the meaningfulness of the associated terms, it follows from the definition in (\ref{E-70}) that \(\bm{\Xi}_i \succeq 0\) for all \(i \in \mathcal{I}\). Next, to prove the convexity of \(h_i(\hat{\mathbf u}_n)\), we define the following quadratic function:
\begin{equation}\small
q_i(\hat{\mathbf u}_n) \triangleq (\mathbf S_i \hat{\mathbf u}_n + \overline{\mathbf c}_i)^T \bm\Lambda_i^c (\mathbf S_i \hat{\mathbf u}_n + \overline{\mathbf c}_i),\notag
\end{equation}  
which is nonnegative since \(\bm{\Lambda}_i^c \succeq 0\). Its gradient and Hessian can be obtained as  
\begin{equation}\label{E-81}\small
\nabla q_i(\hat{\mathbf u}_n) = 2 \mathbf S_i^T \bm\Lambda_i^c (\mathbf S_i \hat{\mathbf u}_n + \overline{\mathbf c}_i), 
\nabla^2 q_i(\hat{\mathbf u}_n) = 2 \mathbf S_i^T \bm\Lambda_i^c \mathbf S_i.\notag
\end{equation} 
Using the chain and product rules, the Hessian of \(h_i(\hat{\mathbf u}_n) = \big(q_i(\hat{\mathbf u}_n)\big)^2\) can be computed as  
\begin{equation}\small
\nabla^2 h_i(\hat{\mathbf u}_n) = 2 \, \nabla q_i(\hat{\mathbf u}_n) \nabla q_i(\hat{\mathbf u}_n)^T + 2 \, q_i(\hat{\mathbf u}_n) \, \nabla^2 q_i(\hat{\mathbf u}_n).\notag
\end{equation}  
Since \(\nabla q_i(\hat{\mathbf u}_n) \nabla q_i(\hat{\mathbf u}_n)^T \succeq 0\) and \(\nabla^2 q_i(\hat{\mathbf u}_n) \succeq 0\), it follows that \(\nabla^2 h_i(\hat{\mathbf u}_n) \succeq 0\), which implies that \(h_i(\hat{\mathbf u}_n)\) is convex for any \(i \in \mathcal{I}\). This completes the proof.

\end{appendices}

\end{document}